\DeclareMathOperator{\cdf}{cdf}
\DeclareMathOperator{\tr}{tr}
\DeclareMathOperator{\proj}{proj}
\DeclareMathOperator{\spn}{span}
\newtheorem{thm}{Theorem}
\begin{document}
\title{Quantum mechanics and data assimilation}
\author{Dimitrios Giannakis}
\date{\today}
\affiliation{Center for Atmosphere Ocean Science, Courant Institute of Mathematical Sciences, New York University, New York, New York 10012, USA}

\begin{abstract}
    A framework for data assimilation combining aspects of operator-theoretic ergodic theory and quantum mechanics is developed. This framework adapts the Dirac--von Neumann formalism of quantum dynamics and measurement to perform sequential data assimilation (filtering) of a partially observed, measure-preserving dynamical system, using the Koopman operator on the $L^2$ space associated with the invariant measure as an analog of the Heisenberg evolution operator in quantum mechanics. In addition, the state of the data assimilation system is represented by a trace-class operator analogous to the density operator in quantum mechanics, and the assimilated observables by self-adjoint multiplication operators. An averaging approach is also introduced, rendering the spectrum of the assimilated observables discrete, and thus amenable to numerical approximation. We present a data-driven formulation of the quantum mechanical data assimilation approach, utilizing kernel methods from machine learning and delay-coordinate maps of dynamical systems to represent the evolution and measurement operators via matrices in a data-driven basis. The data-driven formulation is structurally similar to its infinite-dimensional counterpart, and shown to converge in a limit of large data under mild assumptions. Applications to periodic oscillators and the Lorenz 63 system demonstrate that the framework is able to naturally handle highly non-Gaussian statistics, complex state space geometries, and chaotic dynamics. 
\end{abstract}

\maketitle

\section{Introduction}
Data assimilation is a framework for state estimation and prediction for partially observed dynamical systems \cite*{MajdaHarlim12,LawEtAl15}. Its sequential formulation, also known as filtering, is based on a predictor-corrector procedure, whereby a forward model is employed to evolve the probability distribution for the system state until a new observation is acquired, at which time that probability distribution is updated in an analysis step to a posterior distribution correcting for model error and/or uncertainty in the prior distribution. Since the seminal work of Kalman \cite{Kalman60} on filtering (which utilizes Bayes' theorem for the analysis step, under the assumption that all distributions are Gaussian), data assimilation has evolved to an indispensable tool in virtually every modeling scenario for complex systems, including object tracking \cite{ThrunEtAl05}, weather forecasting \cite{Kalnay03}, and many other important applications \cite{LahozEtAl10}. 

In certain aspects, the predictor-corrector approach in data assimilation resembles another extremely successful branch of modern science, namely, quantum mechanics. Between measurements, the quantum mechanical state evolves under unitary dynamics through the Heisenberg operators, while the measurement process is described by projective dynamics (the so-called wavefunction collapse). As is well known, a fundamental difference between quantum and classical physics is that the quantum mechanical observables are represented by linear operators on a Hilbert space, as opposed to functions on state space in classical physics. In particular, quantum mechanical observables may be non-commuting, and this fundamentally affects the evolution of uncertainty in a quantum system. 

Yet, despite these differences with classical physics, the unitary and projective dynamics underpinning quantum mechanical systems bear some conceptual similarity with the forecast and analysis steps in filtering, respectively, even if the underlying dynamical system is deterministic (i.e., ``classical''). The goal of this work is to explore whether these conceptual similarities can be extended to the level of a mathematically precise data assimilation framework. In fact, we will formulate such a framework by literally transcribing the axioms of quantum mechanics to a partially observed dynamical system as in the setting of data assimilation. 

This framework, which we refer to as quantum mechanical data assimilation (QMDA), can naturally handle a number of challenges encountered by classical data assimilation schemes. In particular, in many real-world applications, rigorous Bayesian approaches (implemented, e.g., via particle filters \cite{VanLeeuwenEtAl19}) become intractable, and as a result ad hoc approximation schemes are commonly employed in both the forecast and analysis steps \cite{LawStuart12}. These schemes oftentimes impose various types of Gaussianity assumptions,  with difficult to to control convergence properties, particularly in the presence of complex deterministic dynamics exhibiting features such as fractal attractors and singular probability measures. On the other hand, QMDA employs finite-rank approximations of the intrinsic evolution and measurement operators of such systems, realized through Koopman operator theory \cite{BudisicEtAl12,EisnerEtAl15} and kernel methods for machine learning \cite{BelkinNiyogi03,CoifmanLafon06,VonLuxburgEtAl08,BerryHarlim16},  with well-established convergence properties. It should be noted that while connections between quantum theory and data assimilation have been studied in the literature \cite{Accardi91,EmzirEtAl17}, these works have generally approached the problem of performing data assimilation for an actual physical quantum system. To our knowledge, the approach presented here, which combines the Koopman operator formalism with abstract quantum mechanical axioms to construct a data assimilation algorithm for deterministic dynamical systems, as well as its approximation via machine-learning techniques, has not been studied elsewhere. 

The plan of this paper is as follows. In Section~\ref{secQMDA}, we describe the basic mathematical formulation of the QMDA approach. In Section~\ref{secCircle}, we illustrate the behavior of this framework in a simple example involving a periodic dynamical system observed through a binary observation function. In Section~\ref{secCompactification}, we consider data assimilation of observables with potentially continuous spectrum, and present an averaging approach to render their spectra discrete. In Section~\ref{secDataDriven}, we describe the data-driven formulation of our schemes using kernel algorithms. The data-driven approach is demonstrated in Section~\ref{secL63} in the context of the partially observed Lorenz 63 (L63) system. Section~\ref{secDiscussion} discusses some aspects of QMDA in relation to classical data assimilation methodologies. Our primary conclusions are stated in Section~\ref{secConclusions}. A technical result on convergence of the data-driven formulation of QMDA is stated and proved in Appendix~\ref{appConvergence}. Appendix~\ref{appComputational} contains a discussion on numerical implementation and computational cost, along with formulas for the QMDA steps expressed in matrix algebra.  

\section{\label{secQMDA}Quantum mechanical formulation of data assimilation}

We begin by reviewing the axioms of quantum mechanics according to the canonical Dirac--von Neumann formulation \cite{Takhtajan08}. 
\begin{enumerate}[{QM}1), wide]
\item Associated with every quantum system is a separable Hilbert space $( H, \langle \cdot, \cdot \rangle_H)$ over the complex numbers. The possible states of the system correspond to the set of non-negative, trace-class operators $ \rho : H \to H $, such that $ \tr \rho = 1 $. The observables of the system are self-adjoint linear operators on $ H$. We will denote the sets of bounded and trace-class operators on a Hilbert space $H$ by $ B(H) $ and $ B_1(H)$, respectively. 
\item Between measurements, the state evolves under the action of a strongly continuous group of unitary operators $U^t : H \to H $, $t \in \mathbb{R}$, called Heisenberg operators. Specifically, the state $ \rho_t $ reached at time $ t $ starting from a state $ \rho_0 \in B_1(H) $ is given by 
\begin{displaymath}
\rho_t = U^{t*} \rho_0 U^t. 
\end{displaymath}
\item Let $ A : D(A) \to H $ be an observable, defined on a dense subspace $D(A) \subseteq H $. By the spectral theorem for self-adjoint operators, there exists a unique projection-valued measure $ E_A : \mathcal{B}(\mathbb{R}) \to B(H) $ on the Borel $ \sigma $-algebra $ \mathcal{B}(\mathbb{R}) $ on $ \mathbb{R}$, such that $ A = \int_\mathbb{R} a \, dE_A(a)$. The set of possible values that a measurement of $A $ can take in a physical experiment is given by the spectrum of $ A$, $ \sigma(A) \subseteq \mathbb{R}$. 
\item If the system is in state $ \rho \in B_1(H)$, then the probability that a measurement of an observable $ A $ will yield a value lying in a Borel set $\Omega \subseteq \mathbb{R}$ is equal to  $ \tr( E_A( \Omega ) \rho ) $.
\item If the system state immediately before a measurement is $ \rho^-$, and a measurement of $ A $ yields a value $ a \in \sigma(A)$, with $E_A(\{a\}) \neq 0 $ (i.e., $a$ is an eigenvalue of $A$), then the state $ \rho^+ $ immediately after the measurement is given by
\begin{displaymath}
\rho^+ = \frac{ E_A(\{a\}) \rho^- E_A(\{a\}) }{ \tr (E_A(\{a\}) \rho^- E_A(\{a\}) ) }.
\end{displaymath}    
\end{enumerate}

Axioms QM2 and QM5 describe the unitary and projective parts of quantum dynamics, respectively. Note that we have stated QM5 only in the case of measurements lying in the point spectrum of $A$. This will be sufficient for our purposes, since the QMDA framework will employ an averaging procedure, approximating the measurement operator in data assimilation by a self-adjoint operator with pure point spectrum.

We now consider how to construct a data assimilation scheme that mimics the quantum mechanical axioms listed above. In this construction, we will assume that the dynamics is described through a continuous measure-preserving flow $ \Phi^ t : M \to M $, $ t\in \mathbb{R} $, on a metric space $M$, with an ergodic, invariant, compactly supported  Borel probability measure $\mu$. Associated with the flow $\Phi^t$ is a unitary group of Koopman evolution operators \cite{Koopman31,BudisicEtAl12,EisnerEtAl15}, acting on vectors in  $L^2(\mu) $ by composition, $U^t f = f \circ \Phi^t$. We consider that the system is observed through a real-valued, bounded measurement function $ h \in L^\infty(\mu) $. With these definitions, the data assimilation analogs of the quantum mechanical axioms above are as follows.
\begin{enumerate}[{DA}1), wide]
\item Associated with the data assimilation system is the separable Hilbert space $L^2(\mu)$, equipped with the standard inner product, $ \langle f, g \rangle_\mu = \int_M f^* g \, d\mu$. The state of the system lies in the set of non-negative, trace-class operators $ \rho \in B_1(L^2(\mu))$, such that $ \tr \rho = 1 $. The observables of the data assimilation system are self-adjoint linear operators on $L^2(\mu)$. In particular, associated with the measurement function $ h $ is a self-adjoint multiplication operator $ T_h \in B(L^2(\mu) ) $, such that  
        \begin{displaymath}
            T_h f = h f. 
        \end{displaymath}
        
    \item Between measurements, the state evolves under the action of the unitary Koopman operators $ U^t : L^2(\mu) \to L^2(\mu) $ induced by the dynamical flow. In particular, the state reached at time $ t $ starting from a state $ \rho_0 \in B_1(L^2(\mu)) $ is given by 
\begin{displaymath}
\rho_t = U^{t*} \rho_0 U^t.
\end{displaymath}
\item Let $ A : D(A) \to L^2(\mu) $ be an observable with the corresponding projection-valued measure $ E_A : \mathcal{B}(\mathbb{R}) \to B(L^2(\mu)) $. The set of values of $A$ that can be observed with nonzero probability is given by the spectrum $ \sigma(A) $. In particular, in the case of the multiplication operator $T_h $, the spectrum $ \sigma(T_h)$ coincides with the essential range of $h$.  We will use the notation $ E_{h} \equiv E_{T_h} $ to represent the projection-valued measure associated with a real multiplication operator $T_h$.  
    
\item If the data assimilation system has state $ \rho \in B_1(L^2(\mu))$, then the probability that a measurement of  $A $ will yield a value lying in a Borel set $\Omega \subseteq \mathbb{R} $ is equal to $ \tr( E_A(\Omega) \rho ) $. 
\item If the data assimilation state immediately before a measurement is $ \rho^- \in B_1(L^2(\mu)) $, and a measurement of $ A $ yields the value $ a \in \sigma(A)$, with $E_A(\{a\}) \neq 0 $, then the state $ \rho^+ $ immediately after the measurement is given by
\begin{displaymath}
\rho^+ = \frac{ E_A(\{a\}) \rho^- E_A(\{a\}) }{ \tr (E_A(\{a\}) \rho^- E_A(\{a\}) ) }.
\end{displaymath}
\end{enumerate}

A comparison between the ``classical'' and ``quantum'' formulations of sequential data assimilation is displayed in Table~\ref{tableComparison}. There, it can be seen that QMDA reformulates the forward dynamics and Bayesian analysis steps in classical data assimilation using the Koopman operator $U^t$ and the spectral projectors $ E_h(\{a\}) $, respectively, both of which are intrinsically linear. We now discuss some of the general properties of this scheme, which we will expand upon and demonstrate with numerical experiments in the ensuing sections.

\begin{table*}
    \caption{\label{tableComparison}Comparison between the ``classical'' and ``quantum'' formulations of sequential data assimilation for a bounded measurement function $ h \in L^\infty(\mu) $. In the classical formulation, $ \mathcal{P}(M)$ denotes the set of Borel probability measures on $M$. Moreover, $ \nu_a : \mathcal{B}(M) \to \mathbb{R}$, $ a \in \mathbb{R}$, denotes the Borel measure on $M$ satisfying $ \nu_a(\Omega) = \nu( \Omega \cap 1_{h^{-1}(\{a\})} ) $. Note that the projective dynamics step in the classical formulation is the Bayesian update rule. In both the classical and quantum formulations, the projective dynamics steps are only well-defined if the denominators in the respective formulas are non-vanishing.}
    \begin{tabular*}{\linewidth}{@{\extracolsep{\fill}}lll}
\hline
& Classical & Quantum \\
\hline
State & Probability\ measure $ \nu \in \mathcal{P}(M) $ & Trace-class operator $ \rho \in B_1( L^2(\mu) ) $ \\
Observable & $ h \in L^\infty(\mu) $ & $ T_h \in B( L^2(\mu ) ) $ \\
Evolutionary dynamics & $ \nu \mapsto \nu \circ \Phi^{-t} $ & $ \rho \mapsto U^{t*} \rho U^{t} $ \\
Measurement probability & $ \nu( 1_{h^{-1}(\Omega)} ) $ & $ \tr( E_{h}(\Omega) \rho ) $ \\
Projective dynamics & $ \nu \mapsto \frac{ \nu_a }{ \nu_a(M) } $ & $ \rho \mapsto \frac{ E_{h}(\{a\}) \rho E_{h}(\{a\}) }{ \tr (E_{h}(\{a\}) \rho E_{h}(\{a\}) ) } $\\
\hline
\end{tabular*}
\end{table*}

First, it should be noted that, as in our statement of the quantum mechanical axioms, we have stated the state update in DA5 only for measurements lying in the point spectrum, denoted $ \sigma_p(A) $, of the observable $A$. When $A $ is a multiplication operator $T_h$ associated with a bounded measurement function $h$, as would be the case in typical data assimilation scenarios, the condition that  $ a \in \sigma_p(A) $  is equivalent to the subset $ h^{-1}(\{a\}) \subseteq M $ of state space  having positive $ \mu$-measure. 

An observable $A$ is said to have pure point spectrum if there exists an orthonormal basis of $L^2(\mu)$ consisting of its eigenfunctions. In that case, $ \sigma_p(A) $ is a dense subset of $\sigma(A) $, so that every measurement of $A $ is arbitrarily close to an eigenvalue. Examples of measurement functions $h$ resulting in $ A=T_h$ with pure point spectrum are indicator functions of non-null subsets of $M$, representing binary measurements. Indicator functions are in turn special cases of simple (``quantized'') functions taking finitely many values,  where $T_h$ has again pure point spectrum. Such functions are appropriate for modeling experimental scenarios with detectors of finite resolution and dynamic range. In contrast, if there exists $a \in \sigma(T_h)$ such that $ \mu( h^{-1}(\{a\}) ) $ vanishes, then $E_h(\{a\}) $ also vanishes and $ a $ lies in the continuous spectrum of $T_h$. Clearly, as with quantum mechanical axiom QM5, for such measurements the update formula in DA5 is not applicable. We will discuss how to address this situation in Section~\ref{secCompactification} below. For now, observe that for an arbitrary self-adjoint multiplication operator $  T_h$, the spectral projection $ E_h(\Omega) $ associated with a Borel subset $ \Omega \subseteq \mathbb{R} $ is itself a multiplication operator; specifically,
\begin{displaymath}
    E_h(\Omega) = T_{1_{\Omega \cap h(M)}},
\end{displaymath}
where $1_{S} : M \to \mathbb{R}$ denotes the characteristic function of any set $S \subseteq M $. It follows from the above that $E_h(\Omega) $ vanishes whenever $ \mu(h^{-1}(\Omega)) = 0 $, which includes the case discussed above with $ \Omega = \{ a \} \subseteq \sigma(T_h) $ and $a$ lying in the continuous spectrum of $T_h$.

Next, observe that because every data assimilation state $ \rho \in B_1(L^2(\mu))$ is a non-negative operator with unit trace, its diagonal elements $ \varrho_j = \langle \phi_j, \rho \phi_j \rangle_\mu $ in any orthonormal basis $ \{ \phi_j \}_{j=0}^\infty $ of $L^2(\mu)$ correspond to the density of a probability measure $ \varrho $ on the non-negative integers, $ \mathbb{N}_0 $ (i.e., the indexing set of the basis); in particular, we have $ \varrho_j \geq 0 $ and $ \sum_{j=0}^\infty \varrho_j = 1 $.  Adopting quantum mechanical terminology, we will say that  $ \rho $ is a pure state if there exists $ f \in L^2(\mu) $ such that $ \rho = \langle f, \cdot \rangle_\mu f $, and will otherwise refer to it as mixed. If $ \rho = \langle f, \cdot \rangle_\mu f $ is pure, then in any orthonormal basis of $ L^2(\mu) $ having $ f $ as one of its elements $ \varrho$ becomes a Dirac $ \delta$-measure.    In step DA5, if $a $ is a simple eigenvalue of $A $, then the state $ \rho^+ $ following a measurement of $A$ yielding the value $a $  will be pure; otherwise, $ \rho^+ $ will be generally mixed. On the other hand, the unitary evolution between measurements in DA2 always maps pure states to pure states.  

Note now that it is a standard result from ergodic theory \cite{EisnerEtAl15} that, that the Koopman group $ \{ U^t \}_{t\in\mathbb{R}} $ has a simple eigenvalue equal to 1, with a constant corresponding eigenfunction equal to $1_M$. It is straightforward to verify that the corresponding pure state, $ \bar \rho = \langle 1_M, \cdot \rangle_\mu 1_M $, satisfies
\begin{displaymath}
    \tr( E_h( \Omega ) \bar \rho ) = \int_\Omega d\mu_h  
\end{displaymath}
for every measurement function $ h \in L^\infty(\mu) $ and Borel set $ \Omega \subseteq \mathbb{R} $, where $ \mu_h : \mathcal{B}(\mathbb{R}) \to [ 0, 1 ] $ is the pushforward probability measure induced on the real line by $h$ and the invariant measure, satisfying $ \mu_h(\Omega) = \mu(h^{-1}(\Omega)) $. As a result, all probabilities computed via step DA4 for the state $ \bar \rho $ (and thus all statistics such as expectation values, variances, etc., derived from it) are equivalent to probabilities/statistics computed with respect to the stationary distribution of $h$ viewed as a random variable (i.e., $ \mu_h$). For this reason, we refer to $ \bar \rho $ as the stationary state of the data assimilation system. 

As a final general remark, it is worthwhile noting that even though the focus of this work is largely on observables associated with multiplication operators $T_h$, which have an underlying ``classical'' observable $h$, our framework is also applicable to general observables $A $ with no classical counterparts. In the context of measure-preserving, ergodic dynamical systems with strongly continuous unitary Koopman groups, a natural such observable is the \emph{generator} of the Koopman group. In particular, it follows by Stone's theorem for one-parameter unitary groups \cite{Stone32} that there exists a skew-adjoint operator $V:D(V) \to L^2(\mu)$, defined on a dense domain $D(V) \subset L^2(\mu)$ via
\begin{displaymath}
    Vf  = \lim_{t \to 0} \frac{U^t f - f}{ t}, \quad \forall f \in D(V).
\end{displaymath}
This operator generates the Koopman group, in the sense that $ U^t = e^{tV} $, with operator exponentiation computed through the spectral theorem for skew-adjoint operators. In particular, after multiplication by the imaginary number $i $ to render it self-adjoint, $V$ behaves analogously to the Hamiltonian operator in quantum physics, which generates the unitary group of Heisenberg operators. In light of this analogy, $ V $ can be viewed as an energy observable for the data assimilation system, which has no classical counterpart associated with a multiplication operator.    

\section{\label{secCircle}Demonstration in a simple ergodic dynamical system}

In this section, we demonstrate the framework described in Section~\ref{secQMDA} in the context of a simple measure-preserving, ergodic dynamical system, namely, a rotation on the circle, $M=S^1$. In this case, the dynamical flow $\Phi^t : M \to M $ is given by
\begin{displaymath}
\Phi^t( \theta ) = \theta + \omega t \mod 2 \pi,
\end{displaymath}
where $ \omega \in \mathbb{R}$ is a frequency parameter. This system has a unique ergodic invariant Borel probability measure $ \mu $, equal to the Haar measure on $S^1$. The corresponding Koopman operators $U^t : L^2(\mu) \to L^2(\mu) $ have a pure point spectrum, with an associated orthonormal basis of  $L^2(\mu)$, $ \{ \ldots, \phi_{-1},\phi_0,\phi_1, \ldots \} $, consisting of Koopman eigenfunctions,
\begin{displaymath}
\phi_j(\theta) = e^{i j \theta}, \quad U^t \phi_j = e^{ij\omega t} \phi_j, \quad \langle \phi_j, \phi_k \rangle_\mu = \delta_{jk}. 
\end{displaymath}
Note that the Koopman eigenfunctions for this system coincide with the Fourier functions on the circle.  

We consider that we observe the system through a binary measurement function $ h : M \to \mathbb{R} $, with
\begin{displaymath}
h  = 1_{M_1}, \quad M_1 = [ 0, \alpha ), \quad \alpha \in ( 0, 2 \pi ).
\end{displaymath}
We also define $ M_0 = M_1^c = [ \alpha, 2 \pi ) $. For this choice of observation map, the associated multiplication operator $ A = T_h \in B(L^2(\mu)) $ has pure point spectrum, $ \sigma( A ) = \sigma_p(A) =  \{ a_0, a_1 \} $, where $ a_0 = 0 $ and $ a_1 = 1 $. Moreover, the orthogonal projection operators to the corresponding eigenspaces, respectively denoted by $ H_0 $ and $ H_1$, are given by $\proj_{H_i} = T_{1_{M_i}} $. The spectral measure $E_h : \mathcal{B}(\mathbb{R}) \to B(L^2(\mu)) $ associated with $A$ is then given by
\begin{align*}
E_h(\Omega) &= 1_\Omega(a_0) \proj_{H_0} + 1_\Omega(a_1) \proj_{H_1} \\ 
&= 1_\Omega(a_0) T_{1_{M_0}} + 1_\Omega(a_1) T_{1_{M_1} },
\end{align*}
and we also have
\begin{displaymath}
    A = \int_\mathbb{R} a \, dE_h(a) = a_0 \proj_{H_0} + a_1 \proj_{H_1} = \proj_{H_1}.
\end{displaymath}

We now examine how (i) the state and measurement probability evolve between measurements under the unitary Koopman operators; and (ii) how the state is updated when measurements take place under projective dynamics. Working throughout in the Koopman eigenfunction basis $ \{ \phi_j \} $, we begin by computing the matrix elements of the state $ \rho_t = U^{t*} \rho_0 U^t $  reached after dynamical evolution for time $ t $ starting from a state $ \rho_0 \in B_1(L^2(\mu))$, in accordance with step DA2:
\begin{align}
\nonumber\rho_{t,jk} &= \langle \phi_j, \rho_t \phi_k \rangle_\mu = \langle U^t \phi_j, \rho_0 U^t \phi_k \rangle_\mu \\
\label{eqUCircle}&= e^{i(k-j)\omega t} \langle \phi_i, \rho_0, \phi_j \rangle_\mu = e^{i(k-j) \omega t } \rho_{0,jk}, 
\end{align}      
where $ \rho_{0,jk} = \langle \phi_j, \rho_0 \phi_k \rangle_\mu $. Next, we compute the matrix elements of the spectral projectors $E_h(\{a_i\}) $ in the Koopman eigenfunction basis, i.e.,  
\begin{displaymath}
    E_{i,jk} := \langle \phi_j, E_h(\{a_i\}) \phi_k \rangle_\mu = \langle \phi_j, \proj_{H_i} \phi_k \rangle_\mu,
\end{displaymath}
where
\begin{align*}
    E_{0,jk} &= 
\begin{cases}
    1-\frac{\alpha}{2\pi}, &j=k,\\
    \frac{-1}{(k-j)\pi} e^{i(k-j)\alpha/2} \sin\left(\frac{(k-j)\alpha}{2}\right), & j \neq k,
\end{cases}\\
E_{1,jk} &=
\begin{cases}
    \frac{\alpha}{2\pi}, &j=k,\\
    \frac{1}{(k-j)\pi}e^{i(k-j)\alpha/2} \sin\left(\frac{(k-j)\alpha}{2}\right), & j \neq k.
\end{cases}
\end{align*}
Using these formulas, the probability $ P_i(t) $ for a measurement of $A$ to take value $a_i$ at time $t$, starting from state $\rho_0$, and assuming no intervening measurements, is given by (DA4),    
\begin{equation}
    \label{eqPCircle}
    P_i(t) = \tr(E_h(\{a_i\}) \rho_t ) = \sum_{j,k=-\infty}^\infty E_{i,jk} \rho_{t,kj}.
\end{equation}
Moreover, the state $\rho^+_i$ immediately after a measurement $a_i$ of $A$ has been observed, and the system was in state $ \rho^-$ right before the measurement, has matrix elements (DA5) 
\begin{align}
    \nonumber\rho^+_{i,jk} &= \langle \phi_j, \rho^+_i \phi_k \rangle_\mu  
    =   \frac{\langle \phi_j,E_h(\{a_i\}) \rho^- E_h(\{a_i\}) \phi_k \rangle_\mu}{Z_i} \\ 
    \label{eqDACircle} &= \sum_{l,m=-\infty}^\infty \frac{E_{i,jl}\rho^-_{lm} E_{i,mk}}{Z_i},
\end{align}
where $ \rho^-_{lm} = \langle \phi_l, \rho^- \phi_m \rangle_\mu$ and
\begin{displaymath}
Z_i = \tr(E_h(\{a_i\})\rho^-E_h(\{a_i\})) = \sum_{j,l,m=-\infty}^\infty E_{i,jl}\rho^-_{lm} E_{i,mj}.
\end{displaymath}

To perform data assimilation in practice using the expressions derived above, we choose a spectral resolution parameter $L \in \mathbb{N}_0$, and approximate all operators by composing them by orthogonal projections $\Pi_L : L^2(\mu) \to L^2(\mu)$, mapping into the $(2L+1)$-dimensional subspace spanned by $ \phi_{-L},\ldots, \phi_L$. That is, we approximate $\rho_t $ in DA2, $P_i(t)$ in~\eqref{eqPCircle}, and $\rho^+_i$ in~\eqref{eqDACircle} by
\begin{equation}
   \label{eqQMDAApprox}
   \begin{aligned}
        \hat \rho_{t} &= \frac{U^{t*}_L \rho_0 U^t_L}{\tr(U^{t*}_L \rho_0 U^t_L)}, \\
        \hat P_i(t) &= \tr( E_{h,L}(\{a_i\}) \hat \rho_t) = \sum_{j,k=-L}^L E_{i,jk} \rho_{t,kj},\\
        \hat \rho^+_{i} &=  \frac{E_{h,L}(\{a_i\} ) \rho^-  E_{h,L}(\{a_i\}) }{\tr({E_{h,L}(\{a_i\} ) \rho^-  E_{h,L}(\{a_i\}) })},
    \end{aligned}
\end{equation}
respectively, where $ U^t_L = \Pi_L U^t \Pi_L $, and $E_{h,L}(\Omega) = \Pi_L E_h(\Omega ) \Pi_L $ for any Borel set $\Omega \subseteq \mathbb{R}$. In particular, $ \hat \rho_t$ and $ \hat \rho^+_i $ have matrix elements
\begin{align*}
    \hat \rho_{t,jk} &= \langle \phi_j, \hat \rho_t \phi_k \rangle_\mu = \frac{e^{i(k-j)\omega t} \rho_{0,jk}}{ \sum_{p,q=-L}^L e^{i(q-p)\omega t} \rho_{0,pq} }, \\ 
    \hat \rho^+_{i,jk} &= \langle \phi_j, \hat \rho^+_i \phi_k \rangle_{\mu} = \frac{\sum_{l,m=-L}^LE_{i,jl} \rho^{-}_{lm} E_{i,mk}}{\sum_{p,q,r=-L}^L E_{i,pq} \rho^-_{qr} E_{i,rp} },
\end{align*}
respectively. Note that the division by $ \tr(U^{t*}_L \rho_0 U^t_L) $ in the expression for $ \hat \rho_t $ is due to the fact that, unlike $U^t$, $U^t_L$ is not unitary, and thus does not preserve the trace of $\rho_0$. Since all operators involved are bounded, and thus continuous, linear operators, the expressions above converge as $L\to\infty$. 

Figure~\ref{figPCircle} displays the evolution of the probability $ \hat P_1(t) $ for a measurement $ a_1 = 1 $ to occur, computed via this approach for three different choices of $ \alpha $ (controlling the relative size of the subsets $ M_i \subset M $ on which $h$ takes values $a_i$) and the time interval between observations, denoted $ \Delta t $. All experiments start at time $ t= 0 $ from the stationary state $ \bar \rho $ (see Section~\ref{secQMDA}), which corresponds to a probability $ P_1(0) = \alpha / 2 \pi $ to observe $ a_1$. Moreover, the initial state $ \theta_0 \in S^1 $ in state space has phase angle equal to 0, and we use the spectral resolution parameter $L=64$.  In Figs.~\ref{figPCircle}(a, b) and \ref{figPCircle}(c), we set $ \alpha = \pi $ and $ \pi / 6 $, respectively. In the former two cases, this results in equal probability to observe $0$ and $1 $ with respect to the invariant measure, which is manifested by the ``truth'' time series $ h(t) = h(\Phi^t(\theta_0)) $ exhibiting a regular square waveform. On the other hand, in Fig.~\ref{figPCircle}(c), $h(t) $ has an intermittent character, as the probability for $ h $ to take value $1$ is six times smaller than the probability for it to take value $ 0 $.  In all three cases, the observation time interval $ \Delta t $  is set to an irrational multiple of the rotation period, $ T = 2 \pi / \omega$; specifically,  $ \Delta t = q T / ( 50 \sqrt{2}) \approx 0.014 q$, with $ q = 20 $ in Fig.~\ref{figPCircle}(a) and 200 in Figs.~\ref{figPCircle}(b, c). Thus, Figs.~\ref{figPCircle}(a) and \ref{figPCircle}(b, c) correspond to frequent and infrequent observations relative to the rotation period, respectively.

\begin{figure}
    \centering
    \includegraphics[width=.92\linewidth]{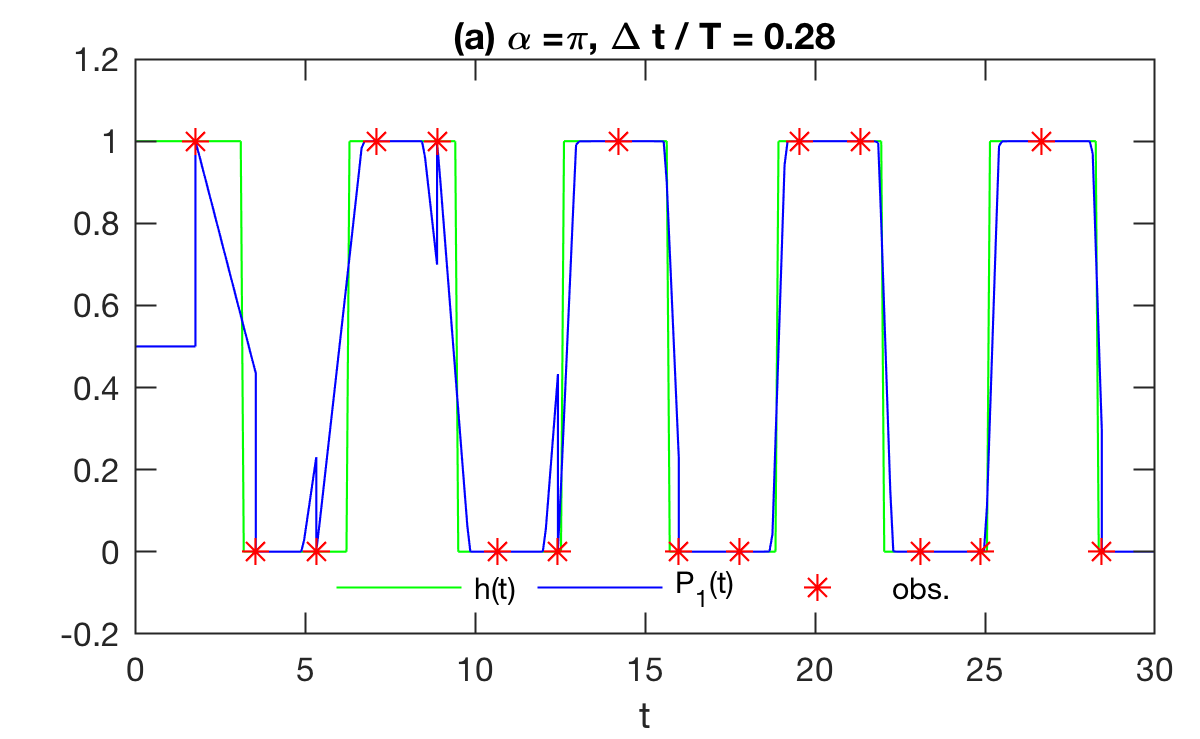}
    \includegraphics[width=.92\linewidth]{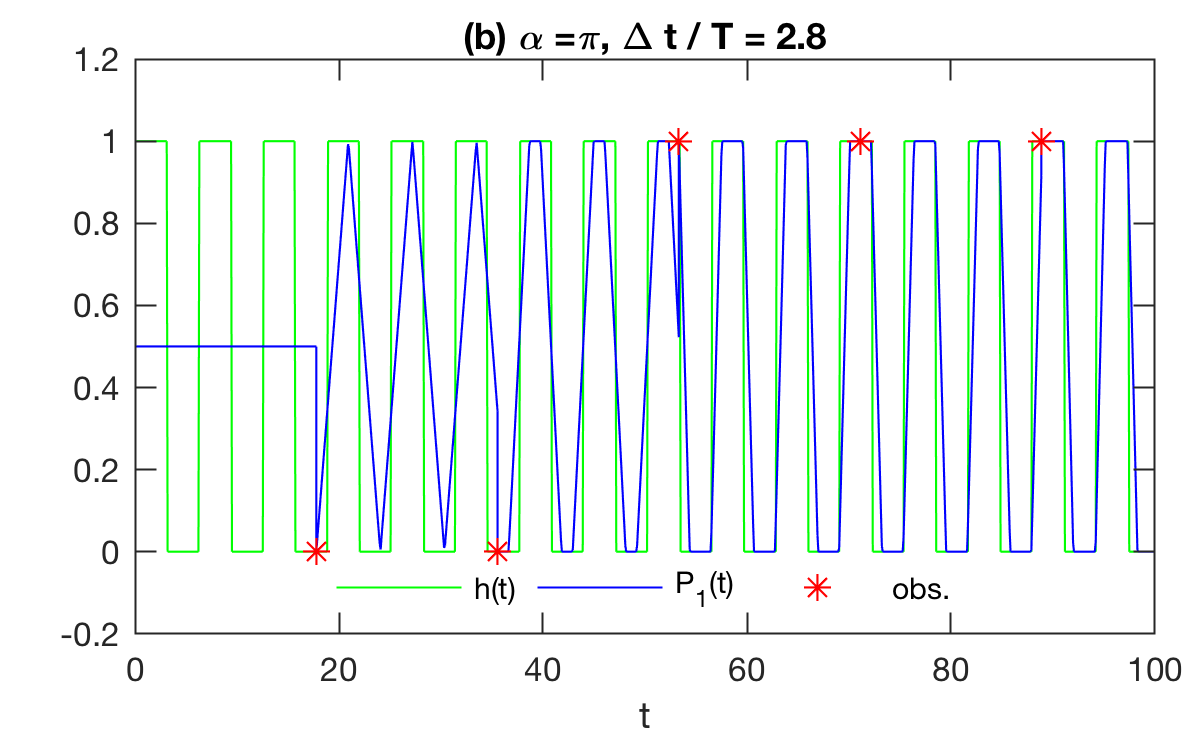}
    \includegraphics[width=.92\linewidth]{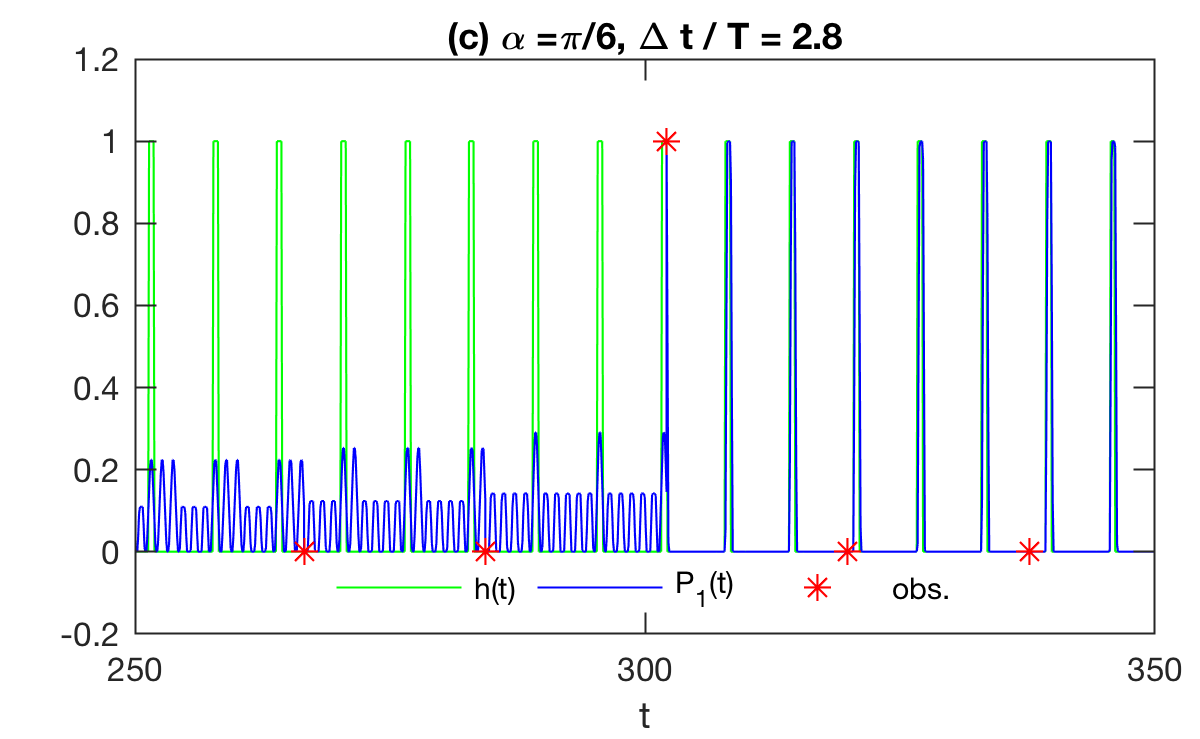}
    \caption{\label{figPCircle}Evolution of the measurement probability $ \hat P_1(t) $ for the binary observable $ T_h $ of the circle rotation, determined via QMDA. Three cases are shown, the first two of which (a, b) have $ \alpha = \pi $ (i.e., equal probability to observe 0 and 1 with respect to the invariant measure), with frequent (a) and infrequent (b) observations relative to the rotation period $ T = 2 \pi / \omega = 2 \pi $. Case (c) has $ \alpha = \pi/6 $ (i.e., the stationary probability to observe 1 is $1/6$ of the probability to observe 0) and infrequent observations as in (b).  The true signal $ h(t) $ and observations are also shown in each panel for reference. In (c), the first 250 time units of the data assimilation period are omitted for clarity of visualization, but have no $a_1=1$ measurements. Notice the improvement of skill after an $a_1 $ measurement is made shortly after $ t = 300$.}
\end{figure}

In all three cases, following an initial transient period, whose length depends strongly on both $\alpha$ and $ \Delta t/ T$, the data assimilation system locks in a pattern for $ \hat P_1(t)$, which tracks the signal $h(t)$ essentially in a deterministic manner. That is,  $ \hat P_1(t) \approx 0 $ whenever $ h(t) = 0 $, and $\hat  P_1(t) \approx 1 $ whenever $ h(t) = 1 $. In Fig.~\ref{figPCircle}(a), accurate tracking of $ h(t) $ is seen to take place from approximately $ t = 14 $. In Fig.~\ref{figPCircle}(b), the time to attain accurate tracking increases to $ t \simeq 55 $ (due to infrequent observations), while in Fig.~\ref{figPCircle}(c) accurate tracking does not take place until after $t=300$ (due to both infrequent observations and low probability to observe $h(t)=1$). It is worthwhile noting that in both Figs.~\ref{figPCircle}(b, c) there is a marked increase in tracking accuracy after the first $ h(t)=1$ observation is made; this is particularly evident in Fig.~\ref{figPCircle}(c).

\section{\label{secCompactification}Spectral discretization of observables}

As stated in Section~\ref{secQMDA}, the state update formula in step DA5 is only applicable if the measurement $a$ lies in the point spectrum of the observable $A$.  In this section, we introduce a modification of that step, which renders it applicable for arbitrary bounded observables associated with multiplication operators. Specifically, we consider the case $ A = T_h $ with $ h : M \to \mathbb{R} $ a function in $L^\infty(\mu)$. 

Recall that  $ a \in \mathbb{R} $ lies in the point spectrum of $T_h $ if and only if the corresponding level set $ h^{-1}(\{a\}) \subseteq M $ has positive $\mu $  measure. As a result, the problematic measurement points $ a $ in the context of step DA5 are those with null corresponding level sets with respect to $ \mu$. These facts suggest that a possible remedy for the ill-definition of DA5 is to approximate $ h$ by a function $ \bar h : M \to \mathbb{R} $ whose level sets have all positive $ \mu $ measure. Because $ \mu$ is a probability measure, $ \bar h $ would necessarily take countably many values on a full-measure subset of $ M $. Here, we will in fact construct $\bar h $ so as to take finitely many values through an averaging procedure applied to $h$, which we now describe. 

\subsection{\label{secCondAv}Conditional averaging}

Let $ \cdf_h : \mathbb{R} \to [ 0, 1 ] $ be the cumulative distribution function (CDF) of $h$, defined as 
\begin{displaymath}
    \cdf_h(a) = \mu( \{ x \in M : h(x) \leq a \} ). 
\end{displaymath}
Any finite partition $ \{ J_0, \ldots, J_{S-1} \} $ of $ ( 0, 1 ) $ into intervals $ J_i \subseteq [ 0, 1 ] $ of equal length, $ 1/ S $, induces partitions $ \Xi = \{ \Xi_0, \ldots, \Xi_{S-1} \} $ and $ \mathcal{M} = \{ M_0, \ldots, M_{S-1} \} $ of $ \mathbb{R} $ and $M$, respectively, whose elements $ \Xi_i = \cdf_h^{-1}(J_i)$ and $ M_i = h^{-1}( \Xi_i )$ have equal measure, $ \mu_h( \Xi_i ) = \mu( M_i ) = 1 / S $, by construction. Here, $ \cdf_h^{-1} : (0, 1) \to \mathbb{R}$ is the quantile function of $h$, defined as
\begin{displaymath}
    \cdf_h^{-1}( s ) = \inf \{ a \in \mathbb{R} : \cdf_h( a) \geq s \}. 
\end{displaymath}
Let also $ \pi_h : \mathbb{R} \to \{ 0, \ldots, S-1\} $  be the affiliation function (projection map) associated with the partition $ \Xi $, mapping $ a \in \mathbb{R} $ to the index $ i$ of the unique set $ \Xi_i \in \Xi  $ in which $ a $ lies. Similarly, define the affiliation function $ \pi : M \to \{ 0, \ldots, S-1\} $ for $ \mathcal{M} $, where $ \pi = \pi_h \circ h $.  We approximate $h$ by its conditional expectation, $ \bar h : M \to \mathbb{R}$, conditioned on the affiliation function $ \pi$, viz.
\begin{displaymath}
    \bar h = \mathbb{E}( h \mid \pi ) =  \sum_{i=0}^{S-1} \bar a_i 1_{M_i}, \quad \bar a_i = \int_{M_i} h \, d\mu.
\end{displaymath}
It then follows that the corresponding multiplication operator $ T_{\bar h} \in B(L^2(\mu))$ has pure point spectrum, and is characterized by the purely atomic projection-valued measure $ E_{\bar h} : \mathcal{B}(\mathbb{R}) \to B(L^2(\mu))$ satisfying
\begin{equation}
    \label{eqPVMQuant}
    E_{\bar h}(\{ \bar a_i \}) = E_{\bar h}(\Xi_i) =  T_{1_{M_i}}.
\end{equation}
With these definitions, we replace step DA5 with the following:
\begin{enumerate}[{DA}1$'$), wide]
\setcounter{enumi}{4}
\item If the data assimilation state immediately before a measurement is $ \rho^- \in B_1(L^2(\mu)) $, and a measurement of $ T_h $ yields the value $ a \in \sigma(T_h)$,  then the state $ \rho^+ $ immediately after the measurement is given by
\begin{displaymath}
    \rho^+ = \frac{ E_{\bar h}(\{ \bar a_i\}) \rho^- E_{\bar h}(\{ \bar a_i\}) }{ \tr (E_{\bar h}(\{ \bar a_i\}) \rho^- E_{\bar h}(\{ \bar a_i\}) ) }, \quad i = \pi_h( a ).
\end{displaymath}
\end{enumerate}

Note that despite this modification of DA5, the measurement probabilities $P_i(t) $ in step DA4, evaluated with respect to $ T_h $ on the elements $ \Xi_i $ of the partition, are consistent with the measurement probabilities with respect to the quantized observable $T_{\bar h} $ on the same set, i.e., for any state $ \rho_t \in B_1(L^2(\mu))$, 
\begin{equation}
    \label{eqPQuantized}
    P_i(t) = \tr(E_h(\Xi_i)\rho_t) = \tr( E_{\bar h}(\{ \bar a_i \}) \rho_t). 
\end{equation}

\subsection{\label{secRelEnt}Information-theoretic measures of skill}

To assess the skill of QMDA, we use relative-entropy measures associated with the partition $\Xi$  \cite{GiannakisEtAl12b}. In particular, at any given time $t$, associated with this partition are three discrete probability measures on $\mathbb{R}$, namely (i) the equilibrium measure $ \bar \nu(Z)= \sum_{i: Z \cap J_i \neq \emptyset} 1/S $ induced by the invariant measure of the dynamics; (ii) the measure $ \nu_t(Z) = \sum_{i:Z \cap J_i \neq \emptyset} P_i(t) $ associated with the data assimilation probabilities from~\eqref{eqPQuantized}; and (iii) the measure $ \tilde \nu_t(Z) = \sum_{i: Z \cap J_i \neq \emptyset} 1_{J_i}( \pi_h(h(t)) ) = \delta_{\pi_h(h(t))}(Z) $ associated with the true signal $ h(t) =  h(\Phi^t(\theta_0))$. Here, $ Z $ is an arbitrary Borel subset of $\mathbb{R} $, and $ \delta_b $ the Dirac measure supported at $ b \in \mathbb{R} $. Using these probability measures, we compute the relative entropies 
\begin{align*}
    \mathcal{D}(t) &= D_\text{KL}( \nu_t \mid\mid \nu ) = \sum_{i=0}^{S-1} P_i(t ) \log_2( S P_i(t) ), \\
    \mathcal{E}(t) &= D_\text{KL}( \tilde \nu_t \mid\mid \nu_t ) = - \log_2 P_{\pi_h( h(t) )}( t ),
\end{align*}
where $ D_\text{KL}( \cdot \mid\mid \cdot) $ denotes relative entropy (Kullback-Leibler divergence) between discrete probability distributions. 

The quantities $ \mathcal{D}(t) $ and $ \mathcal{E}(t) $ are information-theoretic measures of the precision and ignorance of the data assimilation distribution $ \nu_t $. Specifically, $ \mathcal{D}_t $ measures the information content of $ \nu_t $ beyond the equilibrium measure $ \bar \nu $ (which can be thought of as a null hypothesis), while $ \mathcal{E}_t $ measures the lack of information of $ \nu_t $ relative to the truth distribution $ \tilde \nu_t $. The fact that we use base-2 logarithms in our definition of relative entropy means that these information gain/losses are measured in ``bits''. Note that it follows from standard properties of relative entropy that $ \mathcal{D}(t) $ is non-negative, vanishes if and only if $ \nu(t) = \bar \nu $, and is bounded above by $ \log_2S $. The latter, is equal to $ D_\text{KL}( \tilde \nu_t \mid \mid \bar \nu ) $. $ \mathcal{E}(t)$ is similarly non-negative, and vanishes if and only if $ \nu_t = \tilde \nu_t $.  Thus, a ``perfect'' data assimilation scheme would attain $ \mathcal{D}(t) = \log_2 S $ and $ \mathcal{E}(t) = 0$.  Unlike $ \mathcal{D}(t)$,  $ \mathcal{E}(t) $ is unbounded, but the value $ \log_2 S$ happens to also be equal to $ D_\text{KL}( \bar \nu \mid\mid \tilde \nu_t ) $, so that data assimilation distributions with $ \mathcal{E}(t) > \log_2 S $ have more ignorance relative to the truth than the equilibrium measure. As a result, $ \mathcal{E}(t) < \log_2 S$ and $ \mathcal{E}(\tau) \geq \log_2S $ are natural criteria to distinguish between useful versus non-useful data assimilation predictions, respectively.      

\subsection{Application to the circle rotation}

As a demonstration of the approaches in Sections~\ref{secCondAv} and~\ref{secRelEnt}, consider again the periodic dynamical system from Section~\ref{secCircle}, now observed via the continuous observation map $ h : M \to \mathbb{R} $ with $ h(\theta) = \cos \theta $. For this choice of observation map, $T_h$ has purely continuous spectrum, and 
\begin{displaymath}
    \cdf_h(a) = 1 - \frac{\cos^{-1}(a)}{\pi}, \quad \cdf_h^{-1}(b) = \cos( (1-b) \pi ).
\end{displaymath}
It thus follows that for any interval $ J_i \in \{ J_0, \ldots, J_{S-1} \} $ with endpoints $b_i < b_{i +1}$, $ b_i = i / S $, 
\begin{align*}
    \Xi_i &= [ \cos( ( 1 - b_i ) \pi ), \cos( ( 1 - b_{i+1} ) \pi ) ), \\
    M_i &= ( ( 1 - b_{i+1} ) \pi, ( 1 - b_i ) \pi ] \cup ( ( b_i - 1 ) \pi, ( b_{i+1}-1) \pi ], \\
    \bar a_i &= \frac{\sin( ( 1 - b_{j+1})\pi ) - \sin( (1-b_j)\pi)}{\pi}.
\end{align*}
Using the above, we can compute formulas for the matrix elements $E_{i,jk} = \langle \phi_j, E_{\bar h}(\{ \bar a_i \}) \phi_k \rangle_\mu$ in the Koopman eigenfunction basis, namely,
\begin{displaymath}
    E_{i,jk} = 
    \begin{cases}
        b_{i+1} - b_i, & j = k, \\
        \frac{\sin( (k-j)( 1 - b_i) \pi) - \sin( (k-j)(1 - b_{i+1})\pi)}{(k-j)\pi}, & j \neq k.
    \end{cases}
\end{displaymath}
The above, in conjunction with the expressions in~\eqref{eqUCircle} and~\eqref{eqPCircle} for the evolution of the state and measurement probabilities are sufficient to carry out our data assimilation scheme.

Figures~\ref{figPCircleX} and~\ref{figPCircleX2} show results for the measurement probabilities $P_i(t)$ and the relative-entropy metrics $\mathcal{D}(t)$ and $\mathcal{E}(t)$,   obtained for the circle rotation from Section~\ref{secCircle} with frequency $ \omega = 2\pi/ T = 1 $ and measurement interval $ \Delta t =200 T / ( 50 \sqrt{2} ) \approx 2.8 T $ (i.e., the infrequent-observations case from Section~\ref{secCircle} and Fig.~\ref{figPCircle}(c)), using a partition of $ S = 32 $ elements and a spectral resolution of $ L = 64 $ for operator approximation. As in Section~\ref{secCircle}, the experiment starts from the stationary state $ \bar \rho $ (setting again the initial state of the underlying system to $\theta_0 = 0$). Correspondingly, until the first measurement is made at $ t \approx 2.8 T $, the measurement probability is uniform, $ P_i(t) = 1 / S \approx 0.03$, the precision metric is zero, $\mathcal{D}(t) = 0$, and the ignorance metric is equal to the number of bits in the partition, $ \mathcal{E}(t) =  \log_2 S = 5 $. 

When the first measurement is made, $P_i(t)$ collapses to a strongly bimodal distribution, consistent with the fact that $h(\theta) = \cos(\theta) $ is a two-to-one function on the circle. Note, in particular, that in Figs.~\ref{figPCircleX}(a) and~\ref{figPCircleX2}(a) one of the two branches of the measurement probability distribution accurately tracks the true signal $ h(t)$, but on the basis of a single measurement, the data assimilation system assigns nearly equal probability to the two branches. The increase of skill following the first measurement is also manifestly visible in the relative-entropy plots in Fig.~\ref{figPCircle}(c), where $\mathcal{D}(t)$ is seen to jump to $ \simeq 3.5 $ upon occurrence of the first measurement. At that time, the ignorance metric $\mathcal{E}(t)$ exhibits an appreciable decrease from $ \log_2 S $, but is seen to undergo intermittent excursions to $ \geq \log_2 S $ values. Closer inspection (Figs.~\ref{figPCircleX}(b) and~\ref{figPCircleX2}(a, b)) indicates that these excursions are likely due to phase alignment errors between $P_i(t) $ and $ h(t)$. 

Next, as soon as the second measurement arrives, the measurement probability collapses to a unimodal distribution that accurately tracks the true signal. This contrasts the behavior seen in Figs.~\ref{figPCircle}(b,c), where, due to the lower discriminating power of the binary observable employed there, multiple measurements are required before the data assimilation system accurately tracks $h(t)$. With successive measurements, the phase alignment error seen at early times gradually diminishes, and by $ t \simeq 500 $, the measurement probabilities $ P_i(t) $ track the truth signal with persistently high precision and low ignorance (see Figs.~\ref{figPCircleX}(d) and~\ref{figPCircleX2}(c)).  

\begin{figure}
\includegraphics[width=\linewidth]{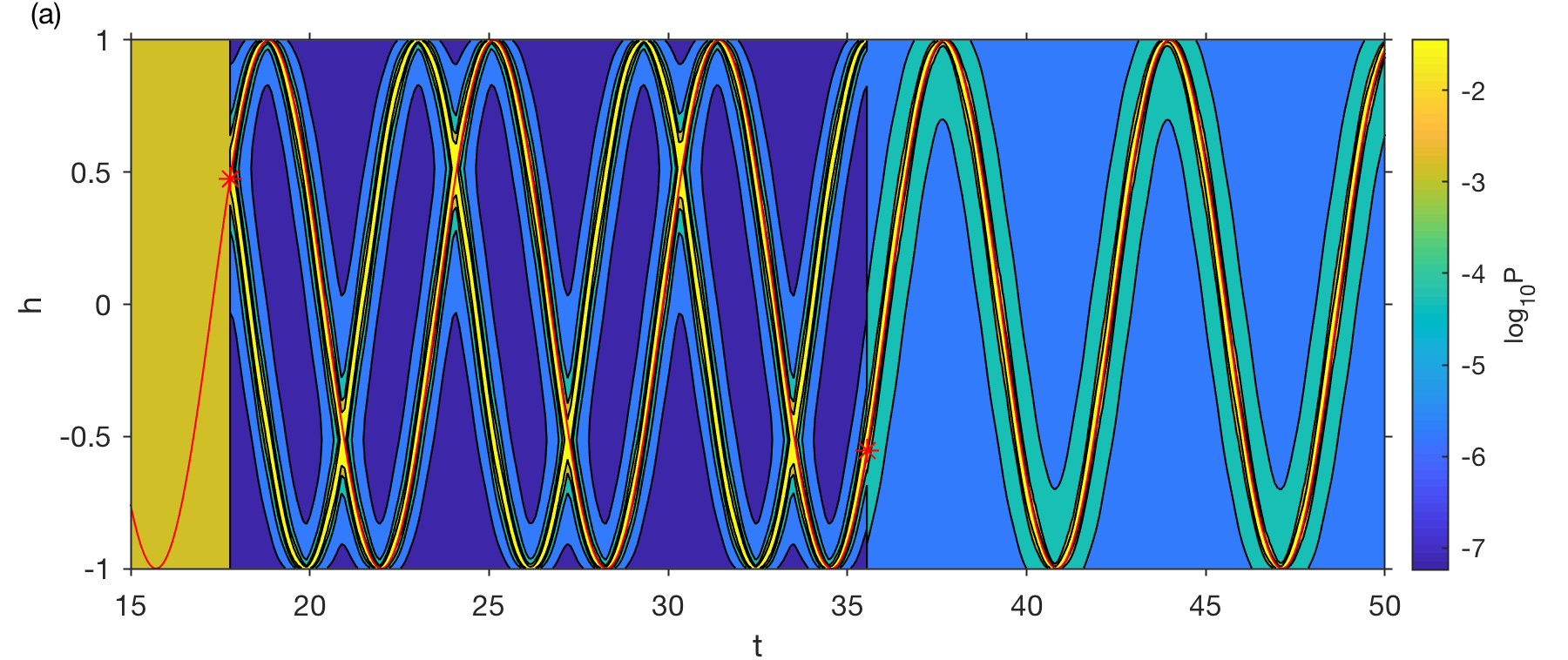}
\includegraphics[width=\linewidth]{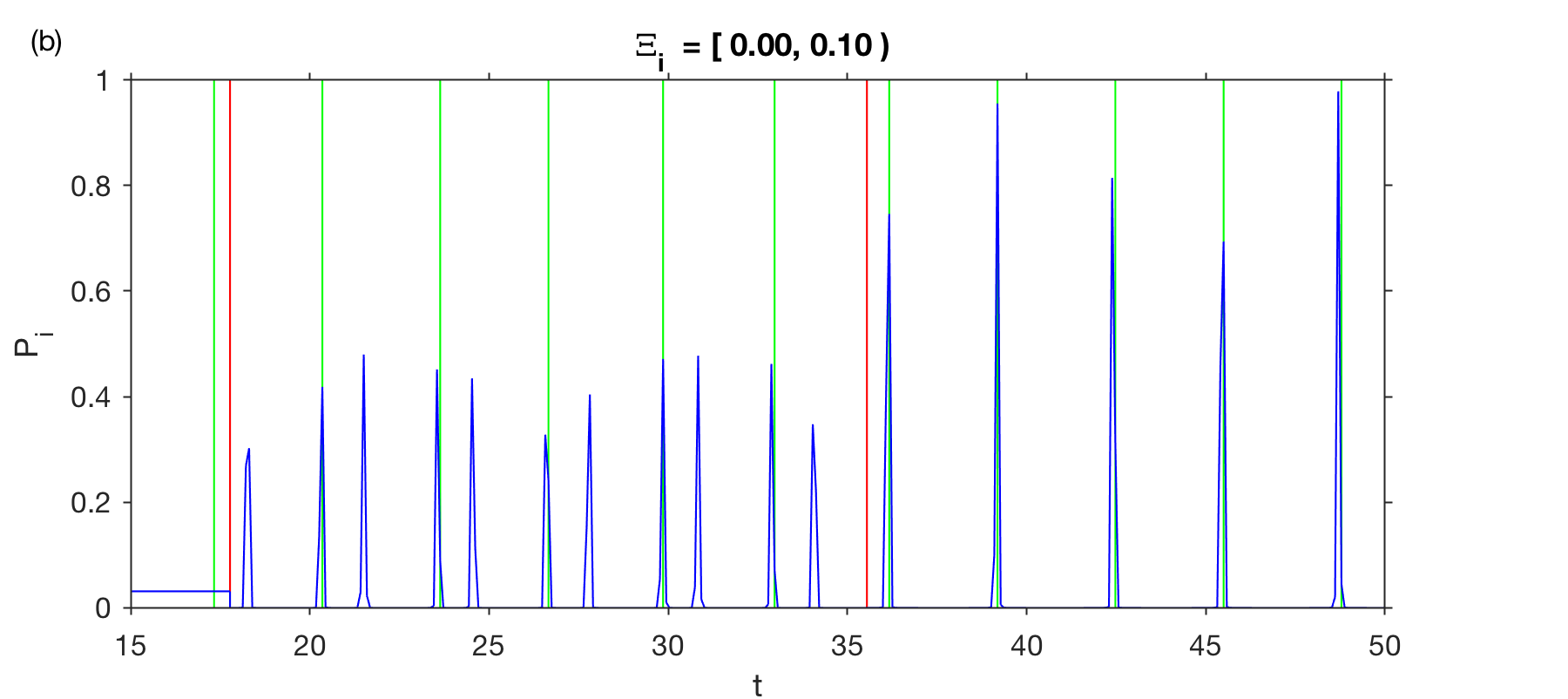}
\includegraphics[width=\linewidth]{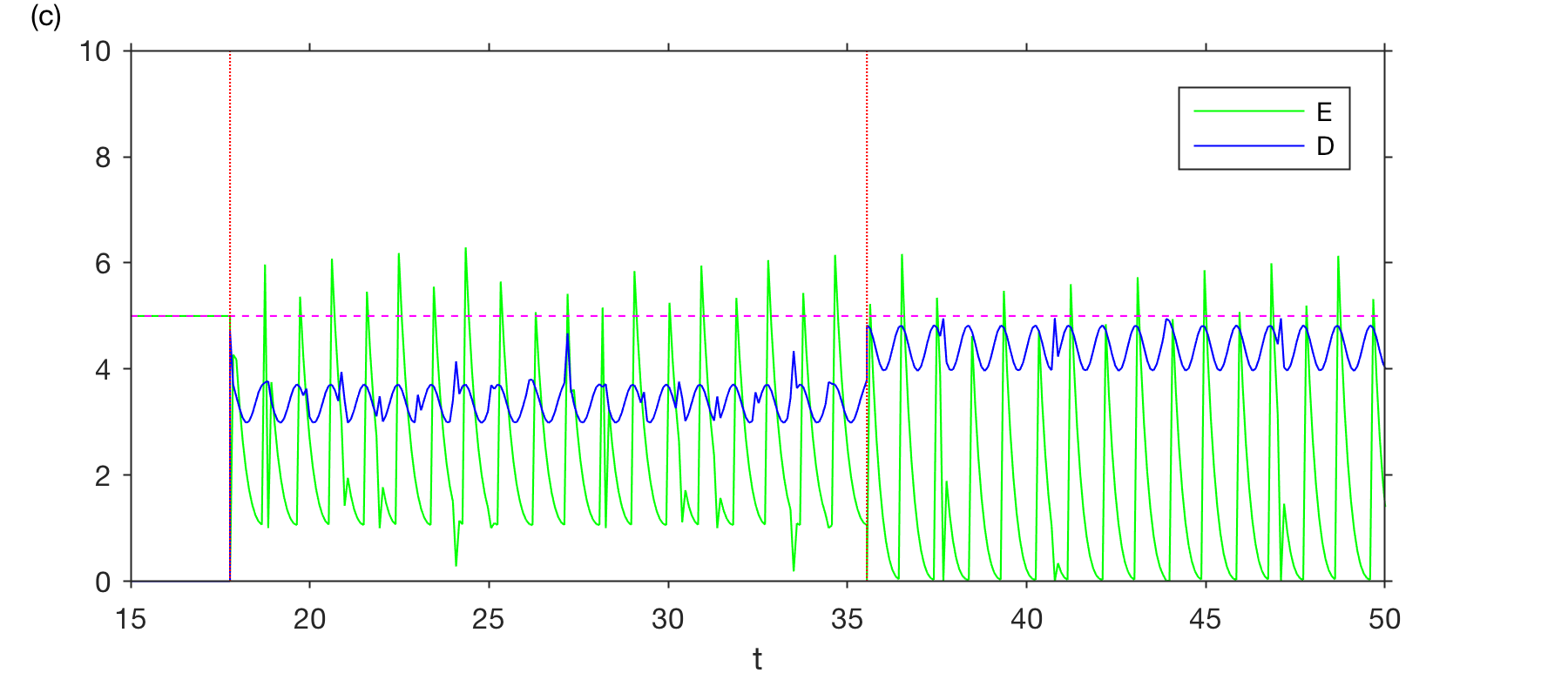}
\includegraphics[width=\linewidth]{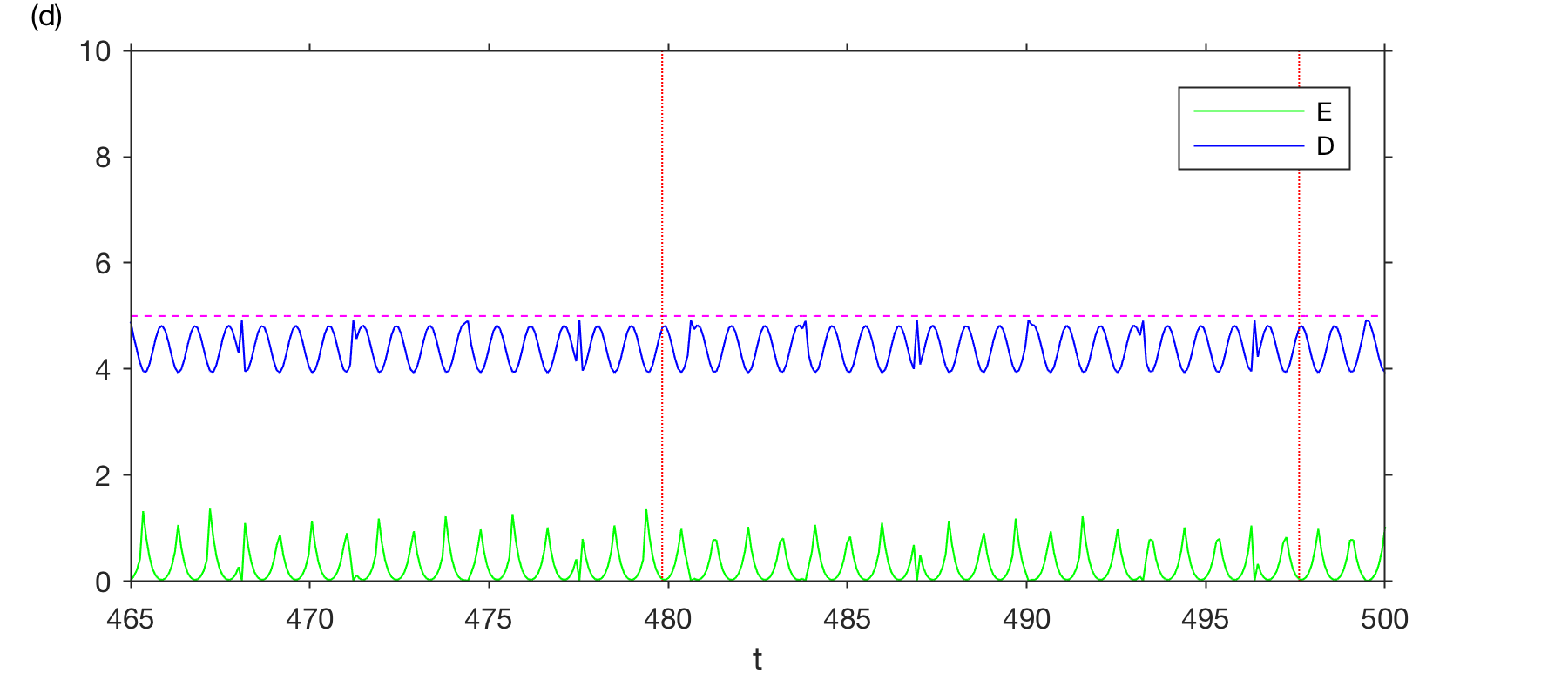}
\caption{\label{figPCircleX}Results of QMDA applied to observable $h(\theta) = \cos\theta $ of the periodic dynamical system on the circle for an observation interval $ \Delta t \approx 2.8 T$ and initial state $ \bar \rho $  as in Fig.~\ref{figPCircle}(c).  (a) Logarithm of the measurement probability $ \hat P_i(t)$ for $h$ to take values in a partition $ \Xi $ of $\mathbb{R}$, consisting of $ S = 32 $ elements of equal probability mass with respect to the invariant measure $ \mu $. The time interval shown contains the first two observations, indicated by red asterisks. The true signal $ h(t) $ is shown in a red line for reference. (b) Measurement probability $ \hat P_i(t)$ for element $ \Xi_{17} \approx [ 0.00, 0.10 ) $ of the partition. The thin shaded grid regions indicate time intervals where $h(t) $ takes values in $ \Xi_{17} $, and vertical red lines indicate observation time instances. (c, d) Precision and ignorance metrics, $ \mathcal{D}(t) $ and $ \mathcal{E}(t) $, for (c) the time interval shown in (a, b) and (d) a later time interval. Red vertical and magenta horizontal lines indicate observation time instances and the maximal number of bits, $ \log_2 S = 5 $, associated with the partition, respectively.  Observe the gradual decrease of $ \mathcal{E}(t) $ caused by misassignment of the occupancy times of the elements of $ \Xi $ at early times. This effect is visible upon close inspection of the $P_i(t) $ plot in (b), and better visualized in the contour plots in Fig.~\ref{figPCircleX2}.}
\end{figure}

\begin{figure}
    \includegraphics[width=\linewidth]{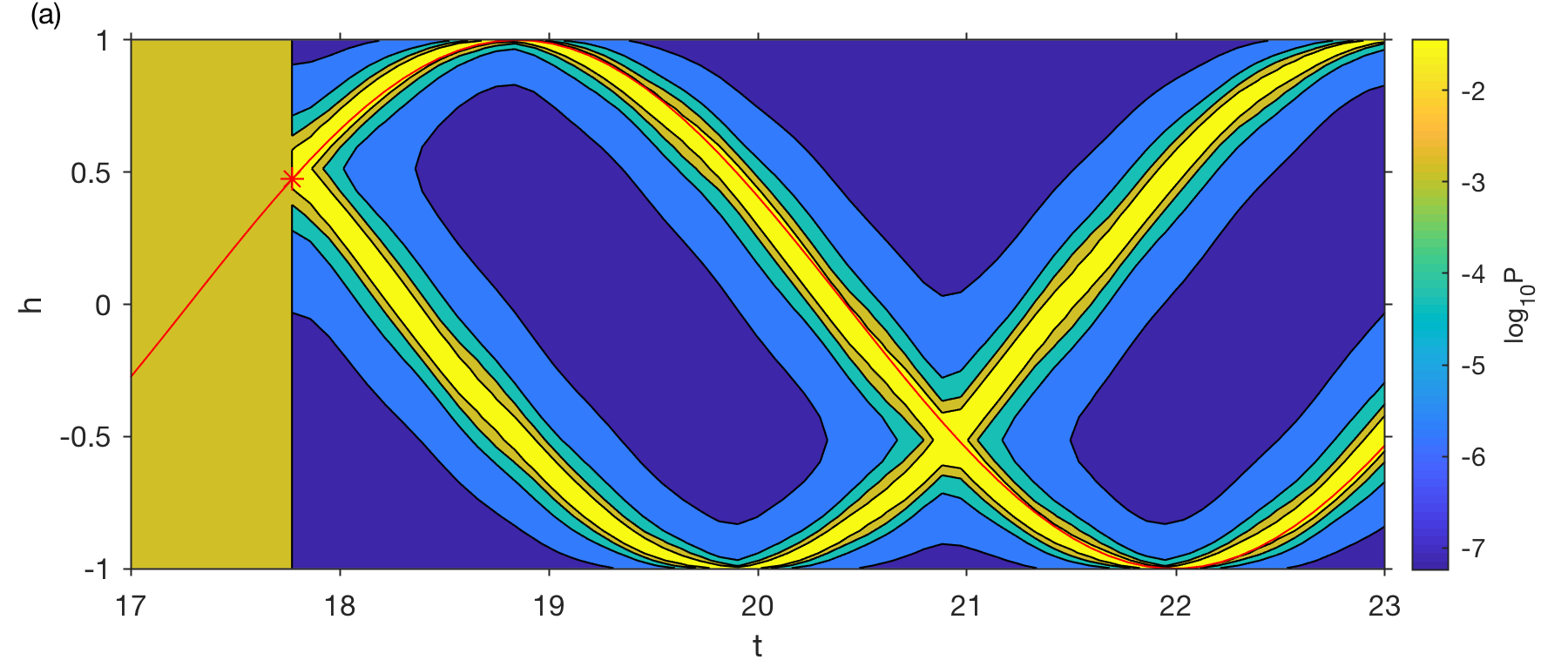}
    \includegraphics[width=\linewidth]{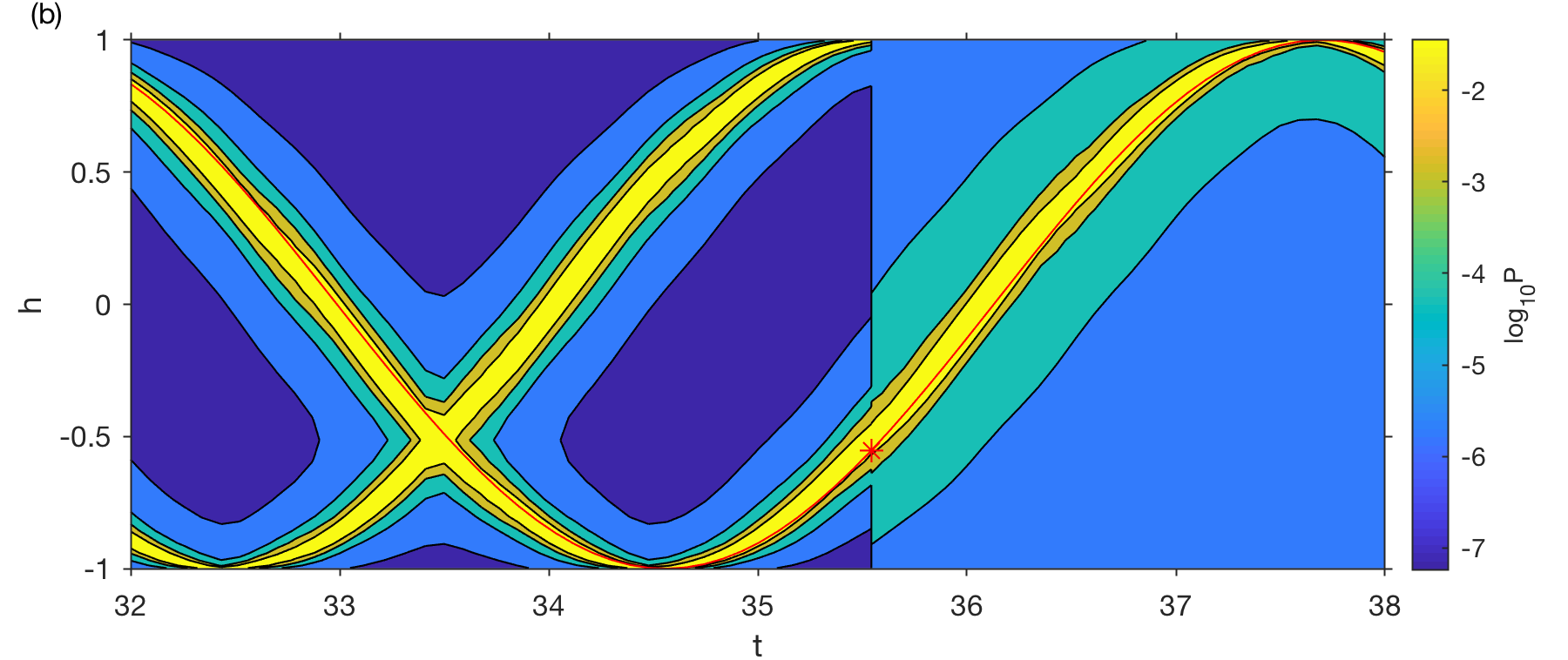}
    \includegraphics[width=\linewidth]{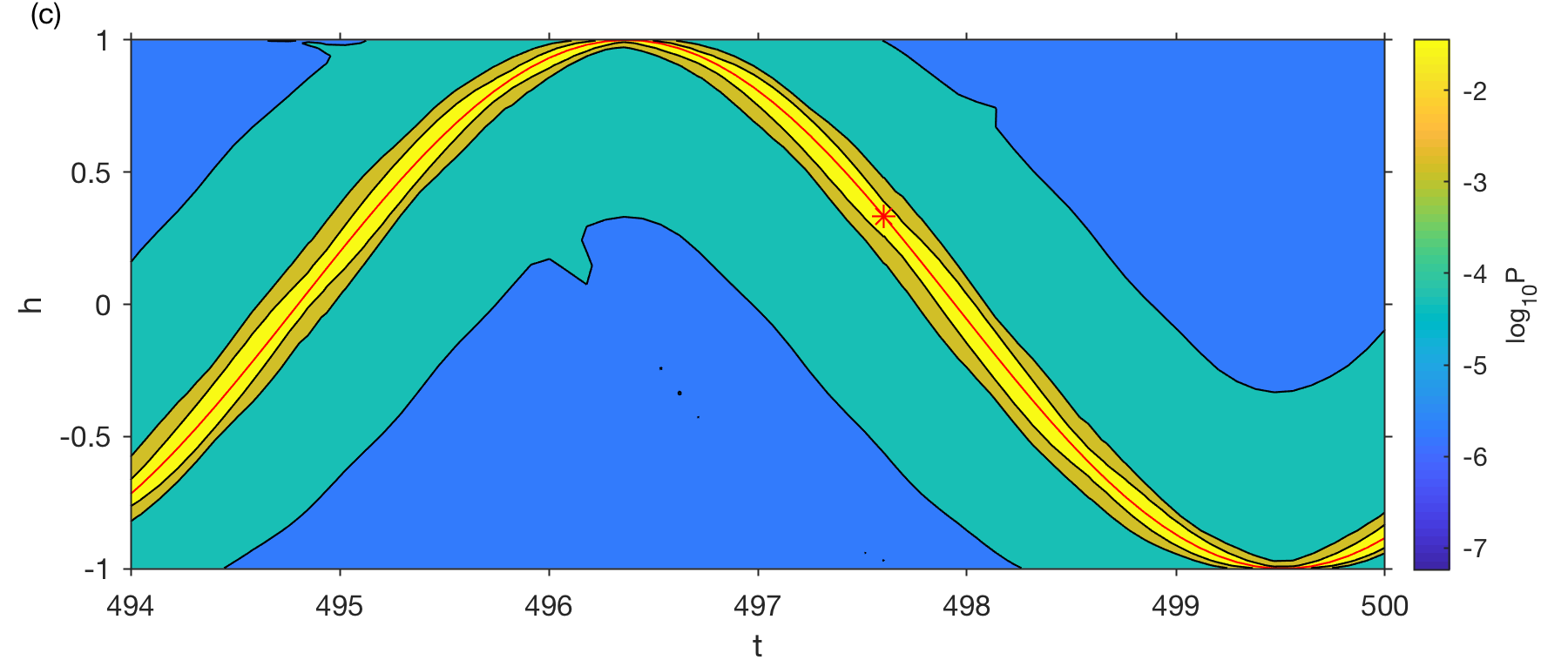}
    \caption{\label{figPCircleX2}(a, b) Detailed views of the evolution of the measurement probability $ \hat P_i(t)$ in Fig.~\ref{figPCircleX}(a) for time intervals containing the first (a) and second (b) measurements. Observe the collapse of the uniform distribution to a bimodal distribution in (a) and the subsequent collapse of the latter to a unimodal distribution in (b). (c) Detailed view of the measurement probability for a later time interval. Notice the improved alignment of the maximal $ \hat P_i(t)$ with the true signal $ h(t) $ compared to (a, b).}
\end{figure}

\section{\label{secDataDriven}Data-driven approximation}

The data assimilation framework presented thus far operates under the assumptions that (i) an orthonormal basis $ \{ \phi_j \} $ for the Hilbert space $L^2(\mu)$ associated with the invariant measure is available; and (ii) the action of the Koopman operators $U^t$ and spectral projectors $ E_{\bar h}(\Xi_i) $ on the basis elements can be computed so as to construct matrix representations of these operators. Arguably, this information will seldom be available in real-world applications, not least because $\mu$ is generally an unknown measure, supported on a non-smooth subset of state space $ M$ (e.g., a fractal attractor). Moreover, the equations of motion, allowing one in principle to act with the Koopman operator, are not known. In response, in this section we establish a data-driven formulation of QMDA, which employs a finite, time-ordered dataset consisting of observations of the system to (i) build an orthonormal basis for an appropriate Hilbert space approximating $L^2(\mu) $; and (ii) construct matrix representations  of operators approximating the Koopman operator and spectral projectors on that space. Convergence of the data-driven approximation schemes to their continuous counterparts will then follow in the limit of large data by ergodicity (see Appendix~\ref{appConvergence}). 

Our approach follows closely \cite{BerryEtAl15,GiannakisEtAl15,Giannakis17,DasGiannakis19,DasEtAl18,GiannakisEtAl19}, who employ kernel algorithms for statistical learning \cite{BelkinNiyogi03,CoifmanLafon06,BerryHarlim16} to build the basis through eigenfunctions of kernel integral operators obtained from the data. In what follows, we describe the main elements of this procedure, referring the reader to \cite{BerryEtAl15,GiannakisEtAl15,Giannakis17,DasGiannakis19,DasEtAl18,GiannakisEtAl19} for some of the mathematical details. Hereafter, $ X \subseteq M $ will denote the (compact) support of the invariant measure $ \mu $.     

\subsection{\label{secObs}Data-driven modeling scenario}

We consider that available to us is a time-ordered sequence $F(x_0), F(x_1), \ldots, F(x_{N-1}) $ of $N$ data points, sampled along a dynamical trajectory $ x_n = \Phi^{n\,\Delta t}(x_0) $, $ x_0 \in M $, through a continuous, injective observation map $ F: M \to Y $, taking values in a metric space $Y$ (the data space). Here, $ \Delta t $ is a positive sampling interval such that the discrete-time map $ \Phi^{\Delta t} : M \to M $ is also ergodic for the probability measure $ \mu $. In applications, the data space is typically linear and finite-dimensional, $ Y = \mathbb{R}^m$, but our methods also apply for nonlinear data spaces (e.g., directional data with $ Y = S^2$), or infinite-dimensional linear spaces (e.g., scalar-field, ``snapshot'' data). We will additionally assume that the observation function $ h : M \to \mathbb{R} $ is continuous, and its values $h(x_0), \ldots, h(x_{N-1}) $ on the sampled dynamical states are known.    

The observations $F(x_n) $ will be used below to construct the data-driven basis employed for operation approximation. In that context, the injectivity of $ F $ will be important to ensure completeness of the basis. In practical applications, the joint values $( F(x_n), h(x_n) ) $ could be acquired in an offline training phase where one has access to the full dynamical system on $M$. If access to an explicit injective map $ F $ is not available, but the values $ h(x_n) $ are still known, it is possible to employ an alternative approach, which involves building an injective map from $h $ through the use of delay-coordinate maps of dynamical systems \cite{Takens81,SauerEtAl91,Robinson05}. Specifically, given a nonzero integer parameter $Q$ (the number of delays), we define $ h_Q : M \to \mathbb{R}^{Q} $ with
\begin{equation}
    \label{eqHQ}
    h_Q(x) = \left( h( x ), h( \Phi^{-\Delta t}( x ) ), \ldots, h(\Phi^{-(Q-1)\, \Delta t}(x)) \right).  
\end{equation}
It is known that under mild assumptions on $ \Phi^t $,  $h $, and  $ \Delta t $, if $M$ is a finite-dimensional differentiable manifold, then for any compact set $\mathcal{U} \subseteq M $ there exists $ Q_* \in \mathbb{N} $ such that, for all $ Q > Q_* $, $ h_Q$ is injective on $ \mathcal{U} $ \cite{Takens81,SauerEtAl91}. Moreover, an analogous result holds if $M$ is a (potentially infinite-dimensional) Hilbert space, and $ \mathcal{U} \subset M $ is a compact subset of finite upper box-counting dimension, forward-invariant under $ \Phi^t $  \cite{Robinson05}.  

Together, the results in \cite{Takens81,SauerEtAl91,Robinson05} hold for many of the dynamical systems encountered in physical applications, including a broad range of ordinary differential equation and partial differential equation models. Noting, in particular, that $h_Q(x_n) $, $ Q-1 \leq n \leq N - 1$, can be evaluated given the time series $ h(x_0), \ldots, h(x_{N-1} ) $ without knowledge of the dynamical flow $ \Phi^t $, delay-coordinate maps provide a practical tool for implicitly constructing injective observation maps from partial (non-injective), time-ordered observations. As a result, in the absence of an explicit injective observation map $ F$, our approach will be to set $ F = h_Q $ with $Q$ sufficiently large.           

\subsection{\label{secSampling}Sampling measures and the associated $L^2$ spaces}

Associated with the dynamical trajectory $x_0, \ldots, x_{N-1}  $ is a sampling probability measure $ \mu_N = \sum_{n=0}^{N-1} \delta_{x_n}/ N $, consisting of equally weighted Dirac measures supported at the sampled states. Note that integration of a measurable function $ f : M \to \mathbb{C } $ with respect to $ \mu_N$ corresponds to a time average of its values at the sampled points, i.e., $ \int_M f \, d\mu_N = \sum_{n=0}^{N-1} f(x_n) / N $. In particular, $ \int_M f \, d\mu_N $ can be evaluated given the values $f(x_n) $ without explicit knowledge of the underlying dynamical states. A sequence of sampling measures $ \mu_N $ starting from a fixed state $ x_0 \in M $ is said to converge to the invariant measure $\mu $ weakly if for every bounded continuous function $ f : M \to \mathbb{C} $, $ \int_M f \, d\mu_N $ converges to $ \int_M f \, d\mu $ as $ N \to \infty$ (i.e., in the limit of large data). The set of all starting points $ x_0 \in M $ for which this property holds is said to be the basin of $ \mu $, and will be denoted by $ \mathcal{B}_\mu $. By ergodicity, $ \mu $-almost every point in the support $X$ of $ \mu $ lies in $ \mathcal{B}_\mu$; that is, $ \mathcal{B}_\mu $ is a full-measure set with $ \mu( \mathcal{B}_\mu) = 1 $. In fact, for many systems encountered in applications, $ \mathcal{B}_\mu $ is a significantly ``larger'' set than  $X$. For example, for systems that possess \emph{physical} measures \cite{Young02}, $\mathcal{B}_\mu $ has positive measure with respect to a reference ambient measure in state space (e.g., a Riemannian measure if $M$ is a Riemannian manifold). This means that the method will converge from a sufficiently large, experimentally accessible, set of initial conditions. 

Hereafter, we will always assume that $ \mu_N $ is a sampling measure associated with a dynamical trajectory starting in $\mathcal{B}_\mu$. By the assumptions stated above and time-continuity of the flow $ \Phi^t$, apart from the trivial case where $ \mu$ is a Dirac measure supported on a fixed point of the dynamics (which we will exclude by assumption), all states $ x_0, x_1, \ldots $ are distinct.  Besides these assumptions, an additional requirement we will make is that the dynamics has an absorbing ball property; specifically, we will require that the trajectory starting from any $ x_0 \in \mathcal{B}_\mu $ is contained within a compact subset  $ \mathcal{X} \subseteq M$, containing $X$. This assumption endows the space of continuous functions on $\mathcal{X}$, $C(\mathcal{X})$, with the structure of a Banach space (equipped with the uniform norm); this will be important for the convergence of the data-driven basis in Section~\ref{secEig}.

Next, as a data-driven analog of $L^2(\mu) $, we consider the Hilbert space $L^2(\mu_N) $ associated with the sampling measure $ \mu_N $. This space consists of equivalence classes $[f]_{\mu_N}$ of measurable functions $ f : M \to \mathbb{C}$ having common values at the sampled states $ x_0, \ldots, x_{N-1} $, and is equipped with the inner product $ \langle f, g \rangle_{\mu_N} = \int_M f^* g \, d\mu_N$. Because $ x_0,\ldots,x_{N-1} $ are all distinct points, $L^2(\mu_N) $ is an $N$-dimensional space isomorphic as a Hilbert space to $\mathbb{C}^N$, the latter equipped with a normalized Euclidean dot product, $ \vec f^\dag \vec g / N $. As a result, we can represent the $ L^2(\mu_N) $ equivalence class in which $ f : M \to \mathbb{C} $ lies by a column vector $ \vec f = ( f( x_0 ), \ldots, f(x_{N-1}) )^\top \in \mathbb{C}^N $, whose elements contain the values of $f$ at the sampled points. Moreover, we can represent every linear operator $ T : L^2(\mu_N) \to L^2(\mu_N) $ by a unique $N \times N $ matrix $ \bm T $ such that $ \vec g = \bm T \vec f $ is the column-vector representation of $ T [ f ]_{\mu_N}$. All of our data-driven techniques will utilize vectors and operators on $ L^2(\mu_N) $, so that they are readily implementable via the tools of matrix algebra; see Appendix~\ref{appComputational}.

\subsection{\label{secEig}Kernels and their associated eigenfunction bases}

We now describe how to build an orthonormal basis of $L^2(\mu_N)$ from the observed data $F(x_n)$ using kernel integral operators, and discuss the convergence of this basis to an orthonormal basis of $L^2(\mu)$ in the limit of large data. For the purposes of this work, a kernel will be a continuous, symmetric, positive-definite function $ k : M \times M \to \mathbb{R} $; that is, a continuous function with the properties that (i) $ k(x,x') = k(x',x) $ for all $x,x'\in M$; and (ii) for any finite sequence $ x_0, \ldots, x_{N-1} $ of points in $M$, the $N \times N $ matrix $ \bm K = [ k( x_m, x_n ) ] $ is positive-semidefinite. Given any Borel probability measure $ \nu $ on $M$ with compact support $ X_\nu $, the kernel $ k$ induces a self-adjoint, trace-class (thus compact) integral operator $ G_\nu : L^2(\nu) \to L^2(\nu) $, defined as
\begin{displaymath}
    G_\nu f = \int_M k( \cdot, x ) f(x) \, d\nu(x).
\end{displaymath}
In particular, there exists an orthonormal basis $ \{ \phi_0, \phi_1, \ldots \} $ of $L^2(\nu) $ consisting of eigenfunctions of $ G_\nu $ corresponding to non-negative eigenvalues $ \lambda_0, \lambda_1, \ldots $. By continuity of $k$ and compactness of $X_\nu $, every eigenfunction $ \phi_j $ with nonzero corresponding eigenvalue has a continuous representative $ \varphi_j \in C(M) $, such that
\begin{displaymath}
    \varphi_j(x) = \frac{1}{ \lambda_j} \int_M k(x,x') f(x') \, d\nu(x').
\end{displaymath}

The kernel $k$ will be said to be $L^2(\nu)$-strictly-positive if $ G_\nu $ is a positive operator, i.e., all eigenvalues $\lambda_j$ are strictly positive. In that case, all eigenfunctions $\phi_j$ have continuous representatives. Moreover $k$ will be called $L^2(\nu)$-Markov if $G_\nu$ is a Markov operator; i.e., $ Gf \geq 0 $ if $ f \geq 0 $, and $ Gf = f $ if $f$ is constant. $L^2(\nu)$-Markovianity implies, in particular, that the maximal eigenvalue $ \lambda_0 $ of $ G_\nu $ is equal to $1$, and there is a constant corresponding eigenfunction $ \phi_0 $, also equal to 1. An $L^2(\nu)$-Markov kernel will be said to be ergodic if $ \lambda_0 $ is a simple eigenvalue. We will use the symbol $ p : M \times M \to \mathbb{R} $ to distinguish a Markov kernel from a general kernel. 

Intuitively, the eigenbases $ \{ \phi_j \} $ associated with $L^2(\nu)$-strictly positive and Markov ergodic kernels can be thought of as generalizations of the Laplace-Beltrami eigenfunction bases associated with heat operators on Riemannian manifolds. In particular, if $ X_\nu$ had the structure of a smooth, closed Riemannian manifold, and $p$ was set to the heat kernel, the $ \phi_j $ would become Laplace-Beltrami eigenfunctions, which are well known to provide a smooth orthonormal basis for the $L^2$ space associated with the Riemannian measure \cite{Rosenberg97}.

Given a dynamical trajectory $ x_0, x_1, \ldots $ starting at $ x_0 \in \mathcal{B}_\mu$, with an associated forward-invariant compact set $\mathcal{X}$ and the corresponding sampling measures $ \mu_N $, $ N \in \mathbb{N}$, we will be interested in a family of kernels $ p_N : M \times M \to \mathbb{R} $ with the following properties: 
\begin{enumerate}
    \item $ p_N $ is a pullback kernel from data space; that is, there is a kernel $ \tilde p_N : Y \times Y \mapsto \mathbb{R} $ such that
        \begin{displaymath}
            p_N(x,x') = \tilde p_N( F(x), F(x')), \quad \forall x,x' \in M.
        \end{displaymath}
    \item $ p_N $ is $L^2(\mu_N)$-strictly-positive and Markov ergodic. 
    \item As $ N \to \infty $, the restriction of $ p_N $ to $\mathcal{X} \times \mathcal{X} $ converges uniformly to an $L^2(\mu) $-strictly-positive, Markov ergodic kernel $ p : \mathcal{X} \times \mathcal{X} \to \mathbb{R} $.  
\end{enumerate}
Property~1 above implies that the kernels $ p_N $ are data-driven, i.e., they can be evaluated at arbitrary states $x\in M $ from the corresponding observations $ F(x) \in Y$ alone. Property~2 implies that associated with the $ p_N $ is a Laplace-Beltrami-like, orthonormal basis $ \{ \phi_{N,0}, \ldots, \phi_{N,N-1} \} $ of $L^2(\mu_N) $ consisting of eigenfunctions $ \phi_{j,N} $ of $ G_{\mu_N} $ with continuous representatives $ \varphi_{j,N} \in C(M)$. In particular, this basis can be obtained from the eigenvectors $ \vec \phi_{j} $ of a known $N\times N $ kernel matrix $ \bm G = [ p_N(x_m,x_n) ]  $, where $ \vec \phi_j $ and $ \bm G $ represent $\phi_{j,N} $ and  $ G_{\mu_N}$, respectively, as described in Section~\ref{secSampling}. Under the assumptions stated in Sections~\ref{secObs} and~\ref{secSampling}, Property~3 implies that for every $ j \in \mathbb{N}_0$, in the limit of large data, $N \to \infty$, $ \varphi_{j,N}$ converges uniformly on $ \mathcal{X} $ to the continuous representative $ \varphi_j $ associated with an orthonormal basis $ \{ \phi_0, \phi_1, \ldots \} $ of $ L^2(\mu) $, consisting of eigenfunctions of $ G_\mu $. See \cite{DasEtAl18,GiannakisEtAl19} for proofs of these results, which make use of spectral convergence results for kernel integral operators established in \cite{VonLuxburgEtAl08}.     

Following \cite{DasEtAl18}, we construct the kernels $ \tilde p_N$ starting from an unnormalized kernel $ \tilde k_N : M \times M \to \mathbb{R} $, and applying to that kernel a normalization procedure to render it Markovian. Specifically, we set $k_N $ to the variable-bandwidth Gaussian kernel introduced in \cite{BerryHarlim16}, 
\begin{equation}\label{eqKVB}
    \tilde k_N( y, y' ) = \exp \left( - \frac{ d^2( y, y' ) }{ \epsilon \sigma_N(y) \sigma_N(y') } \right),
\end{equation}
and apply the symmetric (bistochastic) normalization proposed in \cite{CoifmanHirn13} to obtain $ \tilde p_N$. In~\eqref{eqKVB}, $ d: Y \times Y \to \mathbb{R} $, is a distance function, which we will nominally set to Euclidean distance (2-norm) for data in $ Y = \mathbb{R}^m$. Moreover, $ \epsilon $ is a positive parameter, tuned via an automatic procedure \citep[][Appendix~A]{BerryEtAl15}, and $ \sigma_N : Y \to \mathbb{R}_+ $ a continuous, positive-valued function whose role is to adaptively modify the localization of the kernel with respect to the sampling measure $ \mu_N $. In particular, it can be shown \cite{Giannakis17} that if the support $X$ has the structure of a smooth closed manifold, the corresponding $ \phi_{j,N} $ basis functions converge to Laplace-Beltrami eigenfunctions with respect to a Riemannian metric whose volume form has constant density relative to the invariant measure $\mu$ of the dynamics. While here we do not assume that $X$ has manifold structure (and thus cannot, in general, interpret the $ \phi_j$ as Laplace-Beltrami eigenfunctions), the balancing of the kernel localization due to $ \sigma_N$ plays an important role in enhancing the robustness of the data-driven basis to sampling errors.   

In what follows, we will employ the $ \phi_{j,N} $ basis of $L^2(\mu_N)$ obtained via this approach to formulate data-driven analogs of the QMDA framework described in Sections~\ref{secQMDA} and~\ref{secCompactification}. We refer the reader to \citep[][Algorithm~1]{DasEtAl18} for further details on the procedure to construct $ \tilde p_N $ and select the bandwidth parameter $ \epsilon $. Representative eigenfunctions $\phi_{j,N}$ obtained from data generated by the L63 system (to be studied in Section~\ref{secL63}) are displayed in Fig.~\ref{figL63Phi}.   

\begin{figure*}
    \includegraphics[width=\linewidth]{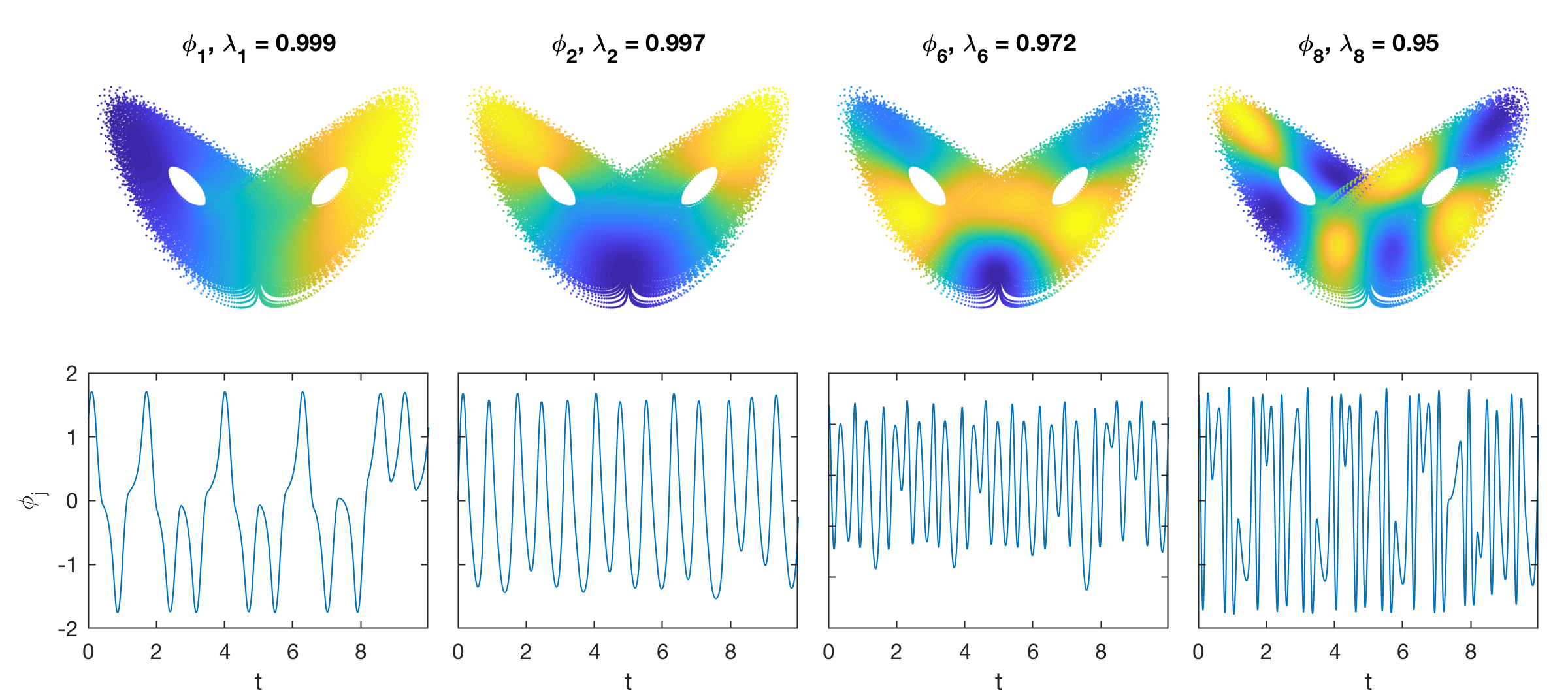}
    \caption{\label{figL63Phi}Representative data-driven eigenfunctions $\phi_{j,N}$ and their corresponding eigenvalues $\lambda_{j,N}$, computed from the fully observed L63 dataset in Section~\ref{secL63}. Top: Scatterplots of the eigenfunction values $ \phi_{j,N}(x_n) $ on the training dataset, with yellow (blue) colors corresponding to positive (negative) values. Bottom: Eigenfunction time series $t_n \mapsto \phi_{j,N}(x_n)$ over a portion of the training dataset spanning 10 natural time units. Notice that, despite the fact that the L63 attractor is not a Riemannian manifold, the eigenfunctions qualitatively resemble a Laplace-Beltrami eigenfunction basis with the corresponding heat-operator eigenvalues. That is, as $\lambda_{j,N}$ decreases, $\phi_{j,N}$ exhibits increasingly small-scale oscillatory behavior, allowing one to represent functions of increasingly fine structure through eigenfunction expansions.}  
\end{figure*}

\subsection{\label{secOpApprox}Operator approximation and convergence}

We now have the necessary ingredients to formulate a data-driven analog of the data-assimilation scheme presented in Sections~\ref{secQMDA} and~\ref{secCompactification}. Structurally, the data-driven formulation resembles closely its infinite-dimensional counterpart, with the Hilbert space $L^2(\mu) $ being replaced by $L^2(\mu_N) $ as described in Section~\ref{secSampling}, and the dynamical and measurement operators on $L^2(\mu) $ replaced by finite-rank operators on $L^2(\mu_N) $, as follows.
\begin{enumerate}[1., wide]
    \item The state $\rho \in B_1(L^2(\mu)) $ is replaced by a non-negative operator $ \rho_N \in B(L^2(\mu_N)) $ with $ \tr \rho_N = 1 $. In particular, the analog of the stationary state $ \bar \rho \in B_1( L^2(\mu)) $ is $ \bar \rho_N = \langle \phi_{N,0}, \cdot \rangle_{\mu_N} \phi_{N,0} $.  
    \item The Koopman operator $ U^t \in B(L^2(\mu))$ for $ t = q \, \Delta t $, $ q \in \mathbb{Z} $, is replaced by the $q$-step shift operator $U^{(q)} \in  B(L^2(\mu_N)) $, defined by 
        \begin{displaymath}
            U^{(q)}_N f( x_n ) = 
            \begin{cases}
                f( x_{n+q} ), & 0 \leq n \leq N-q-1, \\
                0, & N-q \leq n \leq N - 1.
            \end{cases}
        \end{displaymath}
    \item The CDF function $ \cdf_h$ employed in the construction of the partition $\mathcal{M}$ in Section~\ref{secCondAv} is replaced by the empirical CDF, $ \cdf_{h,N} : \mathbb{R} \to [ 0, 1 ] $, where 
        \begin{align*}
            \cdf_{h,N}(a) &= \mu_N( \{ x \in M : h(x) \leq a \}) \\
            &= \sum_{0 \leq n \leq N -1\; :\; h(x_n) \leq a} 1/ N.
        \end{align*}

        Given a uniform partition $ \{ J_0, \ldots, J_{S-1} \} $ of $(0,1)$, the empirical CDF induces partitions $ \Xi_N = \{ \Xi_{0,N}, \ldots, \Xi_{N-1,N} \} $ and  $\mathcal{M}_N = \{ M_{0,N}, \ldots, M_{S-1,N} \} $ of $ \mathbb{R} $ and  $M$, analogously to $ \Xi $ and  $\mathcal{M}$,  with affiliation functions $\pi_{h,N} : \mathbb{R} \to \{ 0, \ldots, S- 1\}$ and $ \pi_N = \pi_{h,N} \circ h $, respectively, leading to the empirical quantized observation function 
        \begin{displaymath}
            \bar h_N = \mathbb{E}_{\mu_N}(h \mid \pi_N) = \sum_{i=0}^{S-1} \bar a_{i,N} 1_{M_{i,N}},   
        \end{displaymath}
        where $\bar a_{i,N} = \int_{M_{i,N}} h \, d\mu_N$.
    \item The multiplication operator $T_{\bar h} \in B(L^2(\mu))$ is replaced by the multiplication operator $T_{\bar h_N} \in B(L^2(\mu_N))$. Note that the spectral measure measure $ E_{\bar h_N} $ of the latter satisfies (cf.~\eqref{eqPVMQuant})
        \begin{displaymath}
            E_{\bar h_N}(\{ \bar a_{i,N} \}) = E_{\bar h_N}(\Xi_{i,N}) =  T_{1_{M_{i,N}}}.
        \end{displaymath}
\end{enumerate}

With these definitions, the data-driven formulation of QMDA proceeds entirely analogously to its counterpart from Sections~\ref{secQMDA} and~\ref{secCompactification}. Specifically, selecting a spectral resolution parameter $L \leq N -1 $, and introducing the orthogonal projections $\Pi_{L,N} : L^2(\mu_N) \to L^2(\mu_N)$ mapping into $ \spn\{ \phi_{0,N}, \ldots, \phi_{L-1,N} \} $,  the state $ \rho_{t,N} $ reached at time $ t = q \, \Delta t$ between measurements of $h$, starting from $ \rho_{0,N} \in B(L^2(\mu_N)) $ is given by (cf.~\eqref{eqQMDAApprox}), 
\begin{displaymath}
    \hat \rho_{t,N} = \frac{U^{(q)*}_{L,N} \rho_{0,N} U^{(q)}_{L,N}}{\tr( U^{(q)*}_{L,N} \rho_{0,N} U^{(q)}_{L,N} )}, \quad U^{(q)}_{L,N} = \Pi_{L,N} U^{(q)}_N \Pi_{L,N}.
\end{displaymath}
Moreover, the probability for $ h $ to lie in interval $ \Xi_{i,N} \in \Xi_N$ at a time $ t $ between measurements is determined from (cf.~\eqref{eqPQuantized}), 
\begin{displaymath}
    \hat P_{i,N}(t) = \tr(E_{\bar h_N}(\{\bar a_{i,N} \}) \hat \rho_{t,N}),
\end{displaymath}
while the update from a state  $ \rho^-_N \in B(L^2(\mu_N))$ following a measurement  yielding the value $ a \in \mathbb{R} $, becomes (cf.~\eqref{eqQMDAApprox}) 
\begin{displaymath}
    \hat \rho_{i,N}^+ = \frac{E_{\bar h_N,L}(\{ \bar a_{i,N}\} ) \rho^-_N  E_{ \bar h_N,L}(\{ \bar a_{i,N}\}) }{\tr( E_{\bar h_N,L}(\{ \bar a_{i,N}\} ) \rho^-_N  E_{ \bar h_N,L}(\{ \bar a_{i,N}\}))},
\end{displaymath}
with $ E_{\bar h_N,L}(\Omega) = \Pi_{L,N} E_{\bar h_N,L}(\Omega) \Pi_{L,N} $, $ \forall \Omega \in \mathcal{B}(\mathbb{R})$, and $ i = \pi_{h,N}( a ) $. 

The formulas stated above are sufficient to sequentially perform data assimilation starting from some initial state, which we will set by default to the state $ \bar \rho_N $; see Appendix~\ref{appComputational} for additional details. Then, under the assumptions stated in Sections~\ref{secObs} and~\ref{secSampling}, and an additional mild assumption on the partition $ \Xi $, the data-driven scheme can be shown to converge in the limit of large data, $ N \to \infty $, in the sense that for a fixed spectral resolution $L$ and bounded time interval for data assimilation, the matrix elements of all operators involved, as well as the partition intervals $\Xi_{i,N}$ and assignments $ \pi_N(a) $, converge to their counterparts from Sections~\ref{secQMDA} and~\ref{secCompactification}. This implies, in particular, that all measurement probabilities $ P_{i,N}(t) $ produced by the data-driven assimilation scheme converge. A precise statement of this convergence is made in Theorem~\ref{thmMain}.  It is important to note that the result holds for fixed $L$. Thus, in order to obtain convergence of the data-driven assimilation scheme in a limit of $ N \to \infty$ (training data size) and $L\to\infty$ (spectral resolution), the latter limit must be taken after the former, or, a sequence $N(L)$ with $N\gg L $ must be employed. Effectively, this is because while every matrix element of the form $ \langle \phi_{j,N}, T_N \phi_{k,N} \rangle_{\mu_N} $ converges as $N\to\infty$, where $T_N$ stands here for the shift operator $U^{(q)}_N$ or any of the spectral projectors $E_{\bar h_N}(\{ \bar a_{i,N} \})$, the convergence is not uniform with respect to $j, k$.

\section{\label{secL63}Application to the Lorenz 63 system}

In this section, we apply the data-driven QMDA framework described in Section~\ref{secDataDriven} to data assimilation of the L63  system \cite{Lorenz63} on $M = \mathbb{R}^3$. The L63 system is generated by the smooth vector field $ \vec V : \mathbb{R}^3 \to \mathbb{R}^3$ with components $ (V^1, V^2, V^3 ) $ at $x = (x^1, x^2, x^3) \in M $ given by $V^1 = \sigma(x^2-x^1) $, $V^2 = x^1  (\rho-x^3) -x^2 $, and $V^3 = x^1x^2-\beta x^3$. Here, $\beta$, $\rho$, and $\sigma$ are real parameters set to the standard values  $ \beta = 8/3 $, $\rho = 28$, and $\sigma = 10 $. For this choice of parameters, the L63 system is rigorously known to have a compact attractor $ X \subset M  $  \cite{Tucker99} with fractal dimension $\approx 2.06$ \cite{McGuinness68}, supporting a physical invariant measure $ \mu$ with a single positive Lyapunov exponent $ \Lambda \approx 0.91$ \cite{Sprott03}. Due to dissipative dynamics, the attractor is contained within absorbing balls \cite{LawEtAl14}, playing here the role of the forward-invariant compact set  $\mathcal{X} \subset M$. In light of these facts, all of the assumptions on the dynamical system  made in Sections~\ref{secObs} and~\ref{secObs} rigorously hold. The L63 system is also known to be mixing \cite{LuzzattoEtAl05}, which implies that its associated Koopman unitary group on $L^2(\mu)$ has no nonconstant eigenfunctions. 

In the experiments that follow, we perform data assimilation for the continuous observation function $ h : M \to \mathbb{R} $ projecting onto the first component of the state vector, $h(x) = x^1 $. We consider two training scenarios, namely one where the observation map $F$ is the identity map on $\mathbb{R}^3$ (i.e., the full state vector is observed), and another where only $h$ is observed and an injective map $F$ is built using delays. In both cases, we sample $F$ at $N = \text{64,000}$ points $x_n$, taken along a numerically generated dynamical trajectory at a sampling interval $\Delta t = 0.01 $. The first point $x_0$ in the trajectory is obtained by numerically integrating the L63 system from an arbitrary initial condition in  $\mathbb{R}^3$ for  $N \, \Delta t$ natural time units, and setting $x_0$ to the state reached at the end of that interval. In the experiment with fully observed training data, we set $ F(x_n) = x_n $; the experiments with partial observations use $F(x_n) = h_Q( x_N) $ with $Q = 24$ delays. Using the data $F(x_0), \ldots, F(x_{N-1})$,  we compute orthonormal basis functions $ \phi_{j,N}$ of $L^2(\mu_N)$  as described in Section~\ref{secEig}. Then, using the values $h(x_0), \ldots, h(x_{N-1}) $ of the observation function, we build a partition $ \Xi_N $ with $ S = 32 $ elements and the corresponding projection operators $E_{\bar h_N, L}(\bar a_i)$ as described in Section~\ref{secOpApprox}. Additional details on numerical implementation can be found in Appendix~\ref{appComputational}. 

The experiments with fully observed and partially observed training data use $ L = 1000 $ and 800 basis functions, respectively. In both cases, the time interval between observations during data assimilation is equal to $ 100 \, \Delta t = 1 $, which is comparable to the characteristic Lyapunov timescale $ 1 / \Lambda \approx 1.1$ of the system. In the data assimilation phase, we employ an underlying truth signal $ h(t ) = h(\Phi^t(\tilde x_0)) $ starting from a state $ \tilde x_0 $ sampled on a trajectory independent of the training data. Results from these experiments for the fully and partially observed training data are shown in Figs.~\ref{figPL63} and~\ref{figPL63Q}, respectively.  

\begin{figure}
    \includegraphics[width=\linewidth]{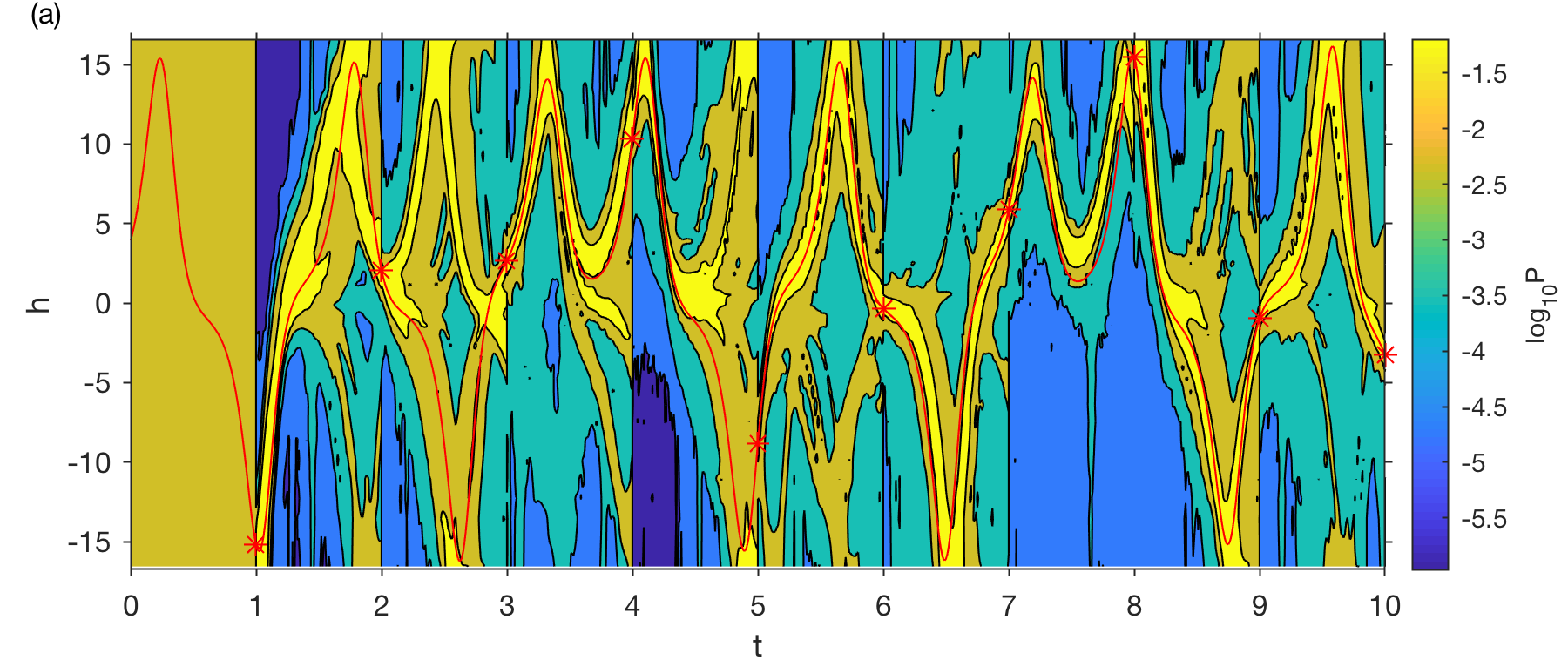}
    \includegraphics[width=\linewidth]{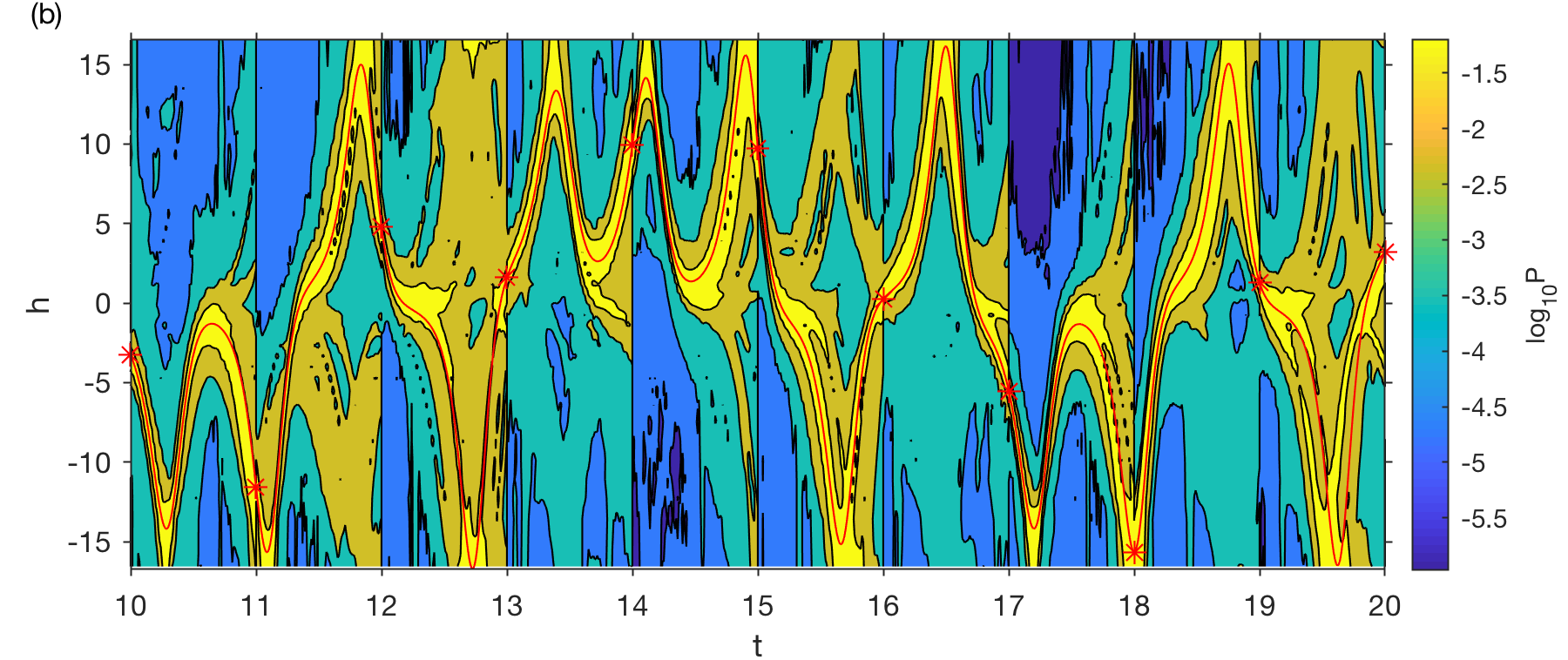}
    \includegraphics[width=\linewidth]{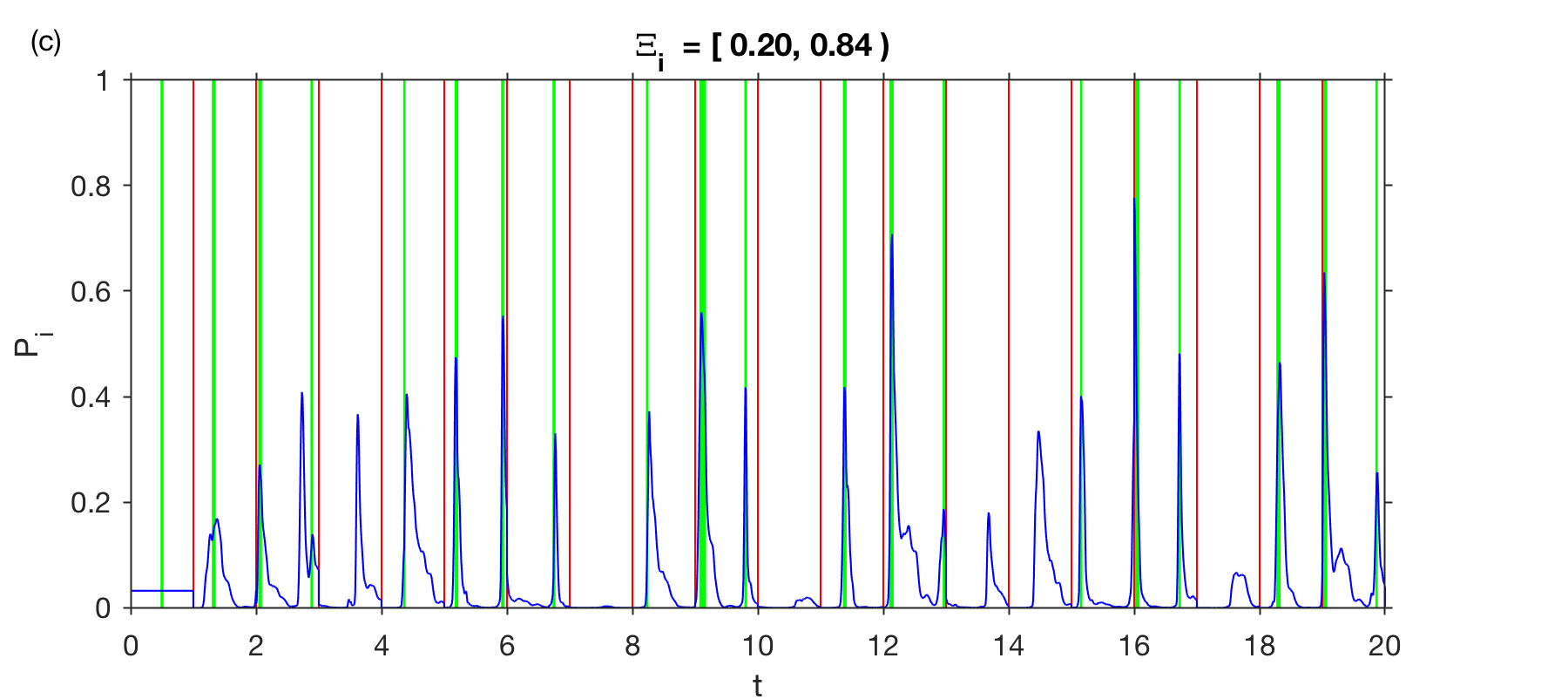}
    \includegraphics[width=\linewidth]{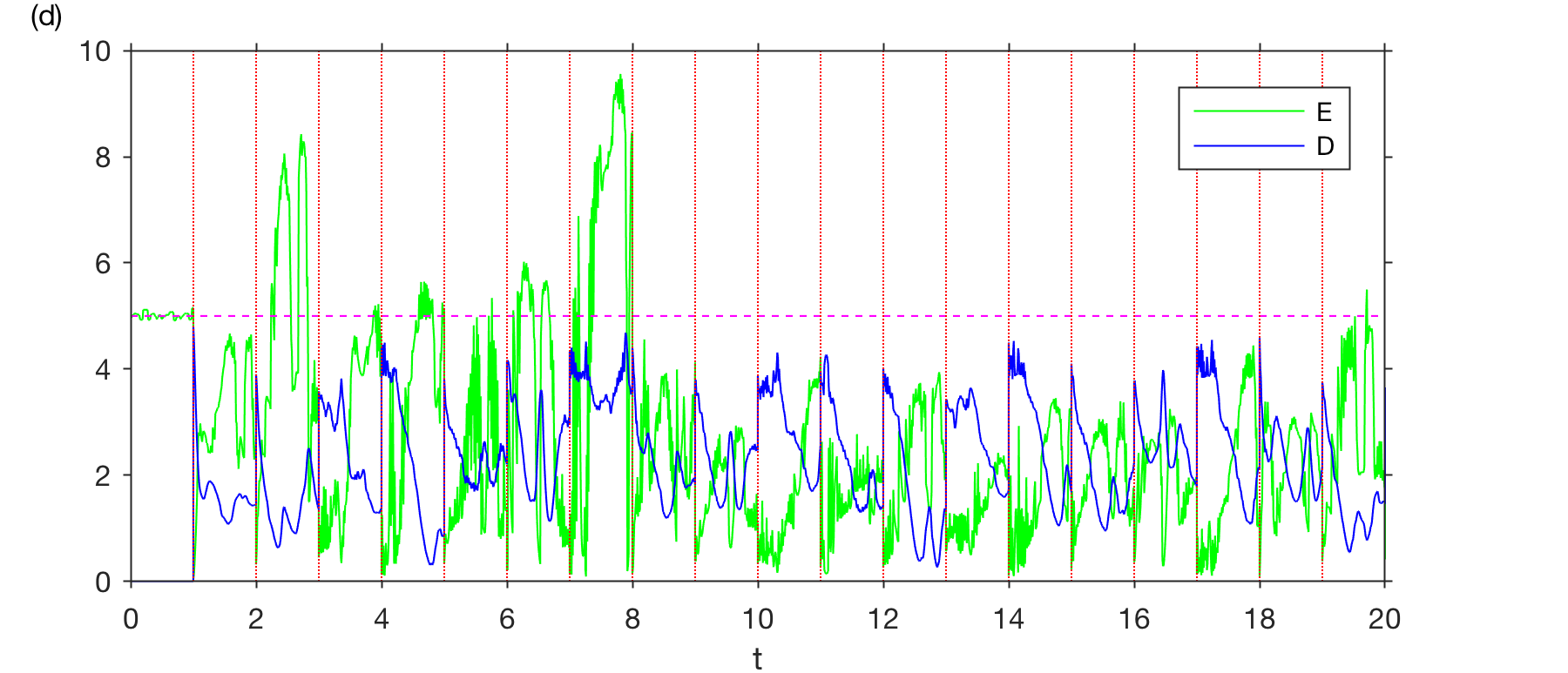}
    \caption{\label{figPL63}Results of QMDA applied to observable $h(x) = x^1$ of the L63 system using the full system state in the training phase (i.e., the eigenfunctions depicted in Fig.~\ref{figL63Phi}). (a, b) Logarithm of the measurement probability $ \hat P_{i,N}(t)$ for $h$ to take values in a partition $ \Xi_N$ of $\mathbb{R}$, containing $ S = 32 $ elements of equal probability mass with respect to the sampling measure $ \mu_N$. Time series $ \hat P_i(t) $ of the probability to obtain a measurement in element $ \Xi_{N,18} \approx [ 0.20, 0.84 ) $ of the partition. Color-coding is as in Fig.~\ref{figPCircleX}(b). (d) Information-theoretic precision and ignorance metrics, $\mathcal{D}(t)$ and $\mathcal{E}(t)$, respectively.}
\end{figure}

\begin{figure}
    \includegraphics[width=\linewidth]{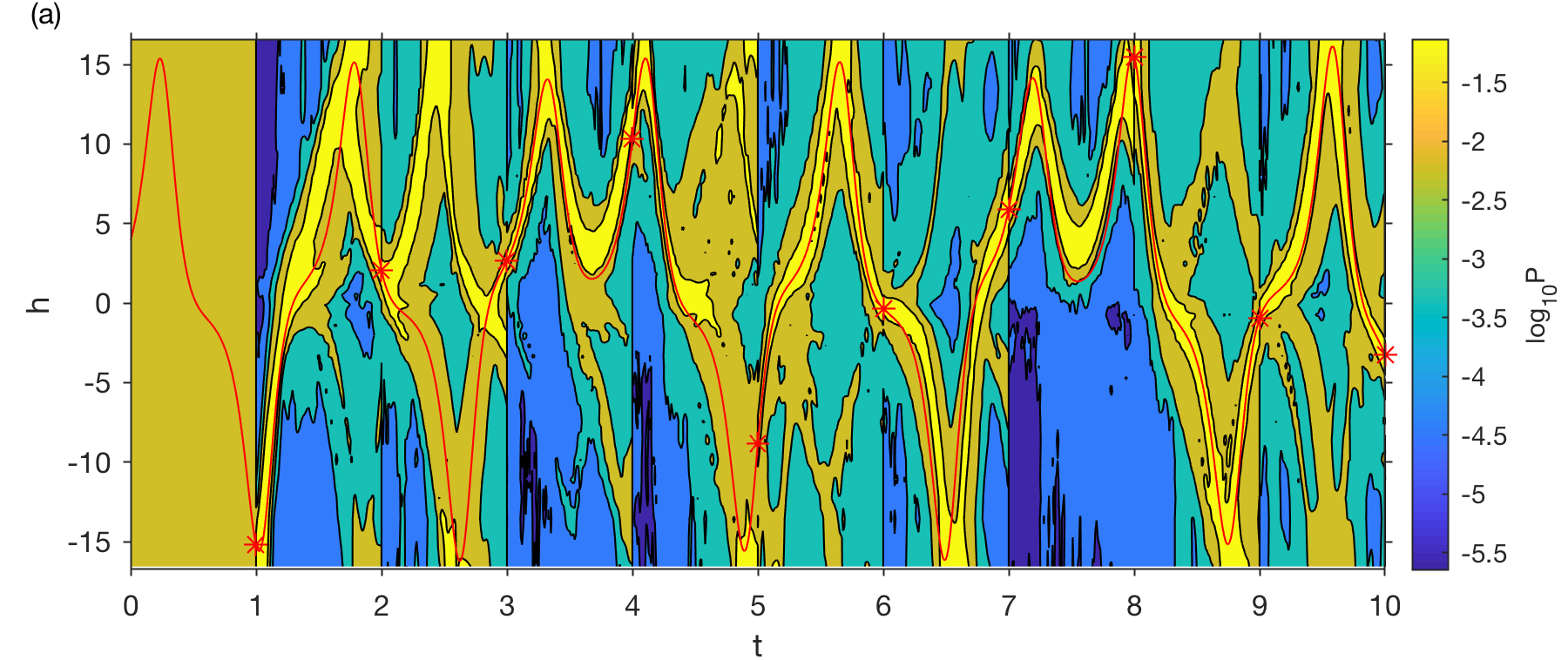}
    \includegraphics[width=\linewidth]{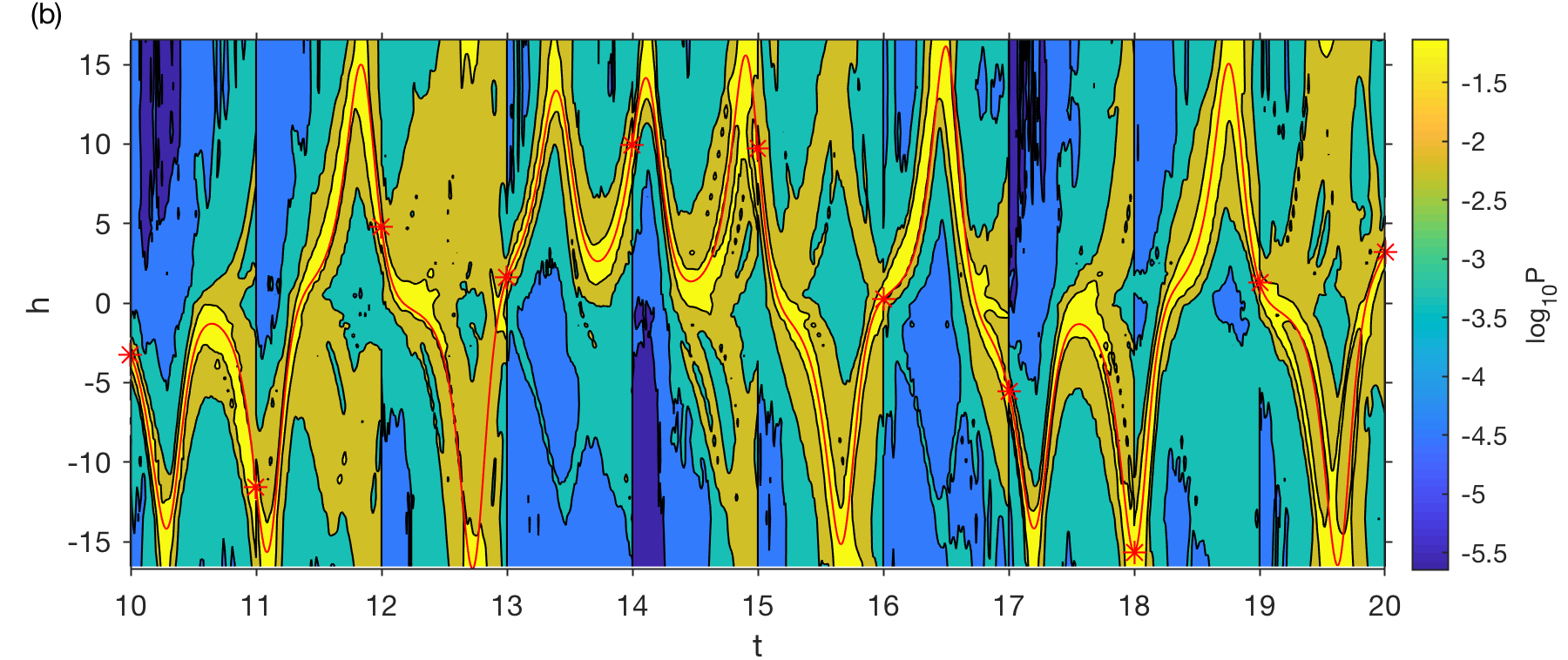}
    \includegraphics[width=\linewidth]{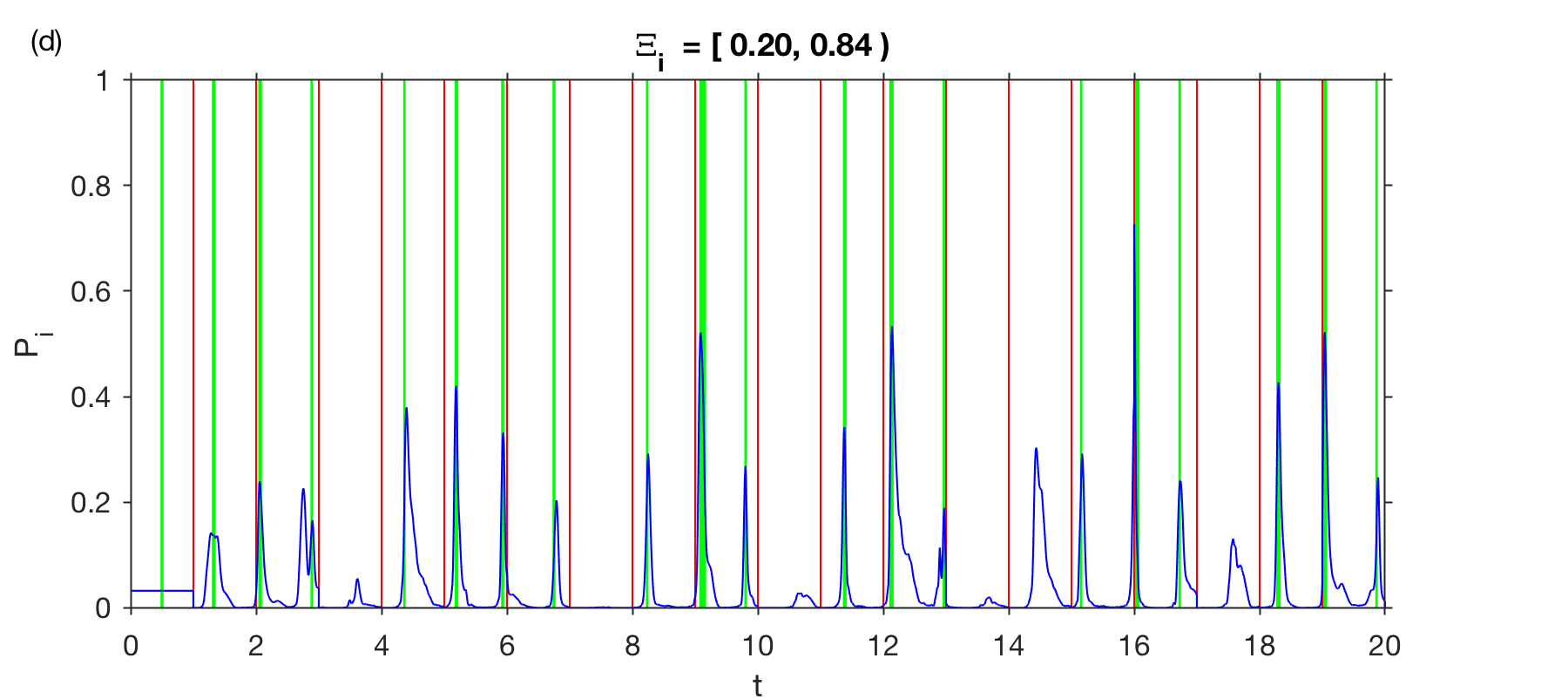}
    \includegraphics[width=\linewidth]{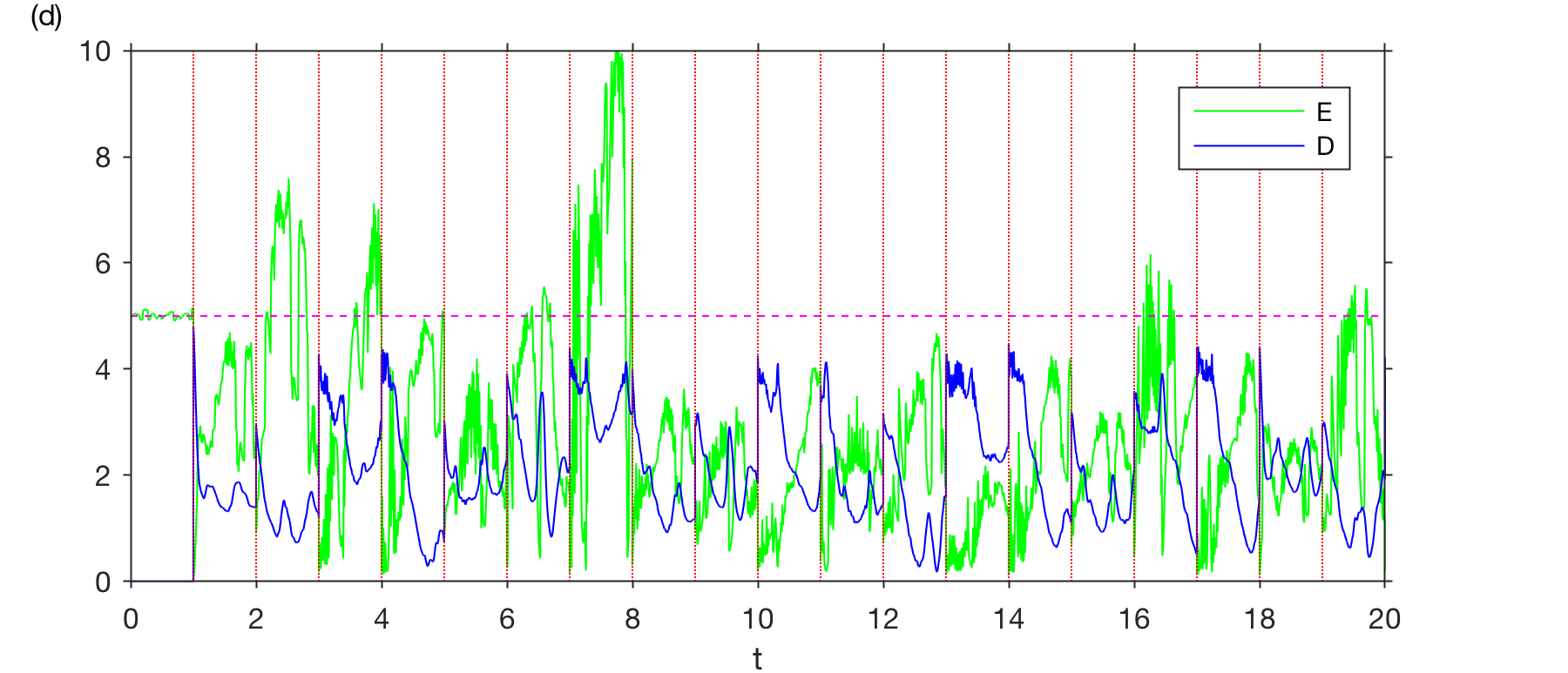}
    \caption{\label{figPL63Q}As in Fig.~\ref{figPL63}, but for partial observations in the training phase, using $ Q = 24 $ delays to compute the basis functions.}
\end{figure}

Starting from the example with the fully observed training data, in Fig.~\ref{figPL63}(a), the initially uniform measurement probability distribution $ \hat P_{i,N}(t) $ associated with the stationary state $\bar \rho_N$ is seen to collapse to a highly sharp probability distribution following a large negative value $h(t)$ measured at $ t = 1 $. After that initial measurement, $ \hat P_{i,N}(t)$ tracks the evolution of $h(t)$ fairly well, though with increasing uncertainty and development of some bimodality for $t \gtrsim 1.2$. The second measurement at $ t= 2 $ produces a value $h(t)$ significantly closer to the origin---this is presumably less informative than the $ t = 1$ measurement since $h(t) \approx 0$ corresponds to the mixing region between the two lobes of the L63 attractor. The lack of information in the $t=2$ measurement is manifested in the ensuing evolution of $ \hat P_{i,N}(t)$, which exhibits significant bimodality and erroneously places the highest probability on positive values of $h$, whereas the true signal takes negative values. This error is clearly visible in the $ \mathcal{E}(t)$ metric in Fig.~\ref{figPL63}(d), which exhibits a pronounced increase to greater than $ \log_2 S = 5$ values over the time interval $ ( 2, 3 ) $. In spite of the poor data assimilation performance for $ t \in ( 2, 3)$, when the next measurement $ h(t_3) $ comes in at $ t = 3 $, the ensuing measurement probability $ \hat P_{i,N}(t)$ tracks the true signal with significantly higher accuracy, despite the fact that $h(t_3) $ is comparably close to zero as $h(t_2)$.  This improvement of skill demonstrates that the data assimilation state $\hat \rho_{t,N}$ can progressively become more informative from a succession of uninformative measurements. Indeed, as shown in Fig.~\ref{figPL63}(d), following a spike in $\mathcal{E}(t)$ for $t \in ( 7, 8 )$, the data assimilation system appears to settle in a regime where $\mathcal{E}(t)$ is either significantly smaller than $5 $, or slightly exceeds that threshold (e.g., the interval $t \in (19,20)$). In general, these periods of larger error $\mathcal{E}(t)$ appear to correlate with observations $h(t)$ close to zero. For instance, see the measurement at $t = 19 $ in Fig.~\ref{figPL63}(b), which is followed by probability distributions $ \hat P_{i,N}(t)$ of comparatively large uncertainty. It is also worthwhile noting that the precision metric $\mathcal{D}(t)$ in Fig.~\ref{figPL63}(d) exhibits markedly more appreciable drops between measurements than in the case of the circle rotation in Fig.~\ref{figPCircleX}(c, d), as expected from the mixing nature of the L63 dynamics.    

Turning now to the example with partial observations in the training phase, a comparison between Figs.~\ref{figPL63} and~\ref{figPL63Q} shows a broadly consistent behavior with the experiment trained with full observations.  That is, following an initial period $ t \in [ 0, 8 )$ which exhibits similar errors to the fully observed case, the data assimilation system reaches a regime of smaller $ \mathcal{E}(t) $ metric, characterized by moderate and infrequent crossings of the $\mathcal{E}(t) = 5$ threshold when $h(t)$ takes values close to zero (e.g., $t = 16$ and 19). Overall, this behavior demonstrates that the delay-coordinate mapping was able to successively recover lost information due to partial observations in the training phase, enabling the construction of a purely data-driven data assimilation scheme for this chaotic dynamical system.

\section{\label{secDiscussion}Discussion}

To place the QMDA approach proposed in this paper in context, we now discuss some of its advantages and shortcomings compared to ``classical'' sequential data assimilation schemes. Here, by ``classical'' we mean data assimilation approaches whose ultimate goal is to perform Bayesian inference; that is, compute the Bayesian posterior distribution of a quantity of interest (which may be the full system state), given a history of observations made on the system \cite{LawStuart12}. In practical applications involving complex systems, rigorous Bayesian inference is not feasible for a variety of reasons, including imperfect or computationally intractable equations of motion for the system dynamics, unknown observational modalities, and singular probability measures (particularly in the setting of deterministic dynamics studied in this paper). As a result, starting from the original work of Kalman \cite{Kalman60}, a vast array of approximation techniques has been developed and currently deployed in operational environments \cite{LawEtAl15,MajdaHarlim12,Kalnay03,LahozEtAl10,ThrunEtAl05}. An attractive feature of QMDA is that it naturally circumvents a number of the challenges in classical data assimilation, and thus avoids the need for ad hoc approximations, by employing intrinsic linear operators to represent the state, dynamics, and observables. 

First, the density operator $\rho$ representing the state of the data assimilation system (i.e., the analog of the posterior measure in Bayesian data assimilation; see Table~\ref{tableComparison}) is a well-behaved linear operator on the $L^2$ space associated with an invariant measure $\mu$ of the system, even if the support of $\mu$ is a null set with respect to an ambient measure on state space (e.g., Lebesgue measure), and/or does not have a smooth structure. This situation clearly occurs in the L63 example of Section~\ref{secL63}, where $\mu$ is supported on the fractal Lorenz attractor, but also in systems with considerably simpler dynamics. For instance, the circle, $S^1$, employed as the state space of the periodic dynamical system in Section~\ref{secCircle}, can be thought of as the support of an invariant measure of the simple harmonic oscillator on $\mathbb{R}^2$ corresponding to constant energy, and treating $\mathbb{R}^2$ as the ambient state space equipped with the Lebesgue (area) measure, makes $S^1$ a zero-measure set. In a Bayesian data assimilation setting, this means that it is not possible to represent the posterior measure by a density function, necessitating in practice some form of approximation, including addition of stochastic noise to regularize the dynamics. 

Particle filters \cite{VanLeeuwenEtAl19} perform this approximation by representing the posterior through a weighted ensemble of Dirac measures (the particles), and can theoretically converge to the true posterior in a large-ensemble limit. In practice, however, these methods suffer from well-known issues of ensemble collapse \cite{ChorinMorzfeld13}, particularly in high-dimensions and/or in the presence of dissipation. As a result, one must resort to some type of ensemble regeneration procedure, with generally difficult to control convergence guarantees. Methods that are not based on sampling frequently invoke Gaussianity assumptions, leading to popular schemes such as the 3DVAR filter, the extended Kalman filter, and the ensemble Kalman filter \cite{LawEtAl15,MajdaHarlim12,Kalnay03}. Despite their popularity, theoretical studies on the behavior of these methods have been limited to particular cases, and have mainly focused on filter accuracy, as opposed to convergence to the full Bayesian posterior distribution. See, e.g., Ref.~\cite{LawEtAl14} for an analysis of the 3DVAR filter applied to linear observations of the L63 system. In contrast, the consistency of the data-driven formulation of QMDA in the large data limit, i.e., its ability to converge to the ``true'' quantum mechanical state update  (Step~DA5 in Section~\ref{secQMDA}), is essentially a direct consequence of the approximability of trace-class operators by finite-rank operators (e.g., Theorem~\ref{thmMain}). While filter accuracy results analogous to those in Ref.~\cite{LawEtAl14} lie outside the scope of this work, it is expected that the framework of linear operator theory on Hilbert spaces could be used to address such questions in a unified manner for broad classes of systems.

Next, with regards to the representation of the forward dynamics, the Koopman operator formalism employed by QMDA is again intrinsically linear, and as discussed in Section~\ref{secDataDriven} is amenable to data-driven approximation through the use of kernel and delay-coordinate techniques without requiring prior knowledge of the equations of motion and/or diffusion regularization, while obeying rigorous convergence guarantees. Previously, delay-coordinate maps  and kernel methods  have been employed in data-driven filtering algorithms to reconstruct unknown dynamics \cite{HamiltonEtAl16}, and correct for observational biases \cite{BerryHarlim17}, respectively. These methods are, however, closer to classical data assimilation approaches since their focus is on approximating the Bayesian posterior distribution as consistently as possible. Other data-driven approaches to filtering have employed neural network architectures in either of the forecast \cite{OualaEtAl18} or analysis steps \cite{CintraEtAl18}. 

Of course, it should be kept in mind that an operator-theoretic, fully empirical approximation of the dynamics does not come without its disadvantages. In particular, in many applications of interest one \emph{does} have access to a first-principles parametric model (e.g., a numerical weather model), which even if imperfect, may be indispensable in intrinsically high-dimensional applications.  At present, we do not have a technique allowing us to seamlessly combine a data-driven Koopman operator model with a first-principles state space model, although recent techniques on semiparametric modeling \cite{BerryHarlim16b} and the Mori-Zwanzig formalism \cite{GouasmiEtAl17} could pave the way for such developments. That being said, it should be noted that in a number of phenomena of interest (e.g., large-scale coherent patterns in climate dynamics such as the El Ni\~no Southern Oscillation and the Madden-Julian Oscillation) there are simply no ``perfect'' first-principles governing equations, while the effective dimension of the dynamics is moderate. In such scenarios, fully data-driven filtering approaches such as QMDA may be competitive, or even exceed the performance of large-scale parametric data assimilation systems.         

As a final remark, we note that the quantum mechanical representation of observables through intrinsically linear multiplication operators, in conjunction with the spectral discretization approach described in Section~\ref{secCompactification}, allows QMDA to naturally handle nonlinear observation functions $h$, as well observational noise. In particular, even though we did not study this topic explicitly here, the number of elements $S$ in the partition $ \Xi $ employed for spectral discretization could be selected according to a desired tolerance to noise. That is, in general, the smaller $S$ is, the more robust the state update step DA5$'$ becomes (see Section~\ref{secCondAv}), at the expense of a loss of resolution afforded by the measurements. Moreover, if prior knowledge about the noise statistics is available, the elements of $\Xi$ could be chosen non-uniformly so as to ensure high robustness in subsets of the range of $h$ where the noise has high strength, and high resolution in subsets where the noise is weaker. It is also worthwhile noting that the prediction output of QMDA for observables is intrinsically probabilistic; that is, according to step DA4, the density operator and spectral measure of an observable provide the probability for it to take values in arbitrary Borel sets. This output can be further post-processed to yield the mean, variance, skewness, and other statistical quantities of interest. In contrast, classical data assimilation techniques utilizing Gaussian approximations only dynamically evolve the mean and covariance.

\section{\label{secConclusions}Conclusions}

In this work, we have developed a framework for sequential data assimilation in measure-preserving ergodic dynamical systems combining elements of operator-theoretic ergodic theory and quantum mechanics. A key aspect of this approach has been to transcribe the Dirac--von Neumann axioms of quantum dynamics and measurement to the setting of measure-preserving ergodic dynamics by choosing as the quantum mechanical Hilbert space the $L^2(\mu)$ space associated with the invariant measure of the dynamics, and as the Heisenberg evolution operators the unitary Koopman operators acting on $L^2(\mu)$. Also in direct analogy with quantum mechanics, we represent the time-dependent state of the data assimilation system by a density operator on $L^2(\mu)$ and the system observation function $h$ by its corresponding self-adjoint multiplication operator $ T_h$. With these identifications, a quantum mechanical data assimilation (QMDA) scheme follows very naturally by allowing the state to evolve under unitary dynamics induced by Koopman operators between measurements and projective dynamics under the spectral projectors of the observation operator. 

One issue that such a scheme must confront is that the multiplication operators associated with typical observation functions will have continuous spectrum. Here, we addressed this issue via a quantization approach, whereby $h $ is replaced by a discrete variable $\bar h$ such that the corresponding multiplication operator $ T_{\bar h} $ has pure point spectrum. In particular, $ \bar h $ was constructed by binning $h$ into bins of equal probability mass with respect to $ \mu$, but one could employ different averaging approaches, e.g., to take into account observational noise. We also studied the problem of constructing a data-driven formulation from a finite collection of time-ordered observations of the system state or $h$, assuming no prior knowledge of the equations of motion. This formulation employed operator approximation techniques  in a basis of $L^2(\mu)$ learned from the observed data  \cite{BerryEtAl15,Giannakis17,DasEtAl18}, leading to representations of the Koopman and measurement operators via matrices that provably converge in an asymptotic limit of large data (Theorem~\ref{thmMain} in Appendix~\ref{appConvergence}).       

An attractive feature of the QMDA approach presented here is that it requires no ad hoc approximations of the dynamics and/or observation modality, which are frequently necessary in order to apply classical data assimilation techniques to measure-preserving deterministic systems. Indeed, as we demonstrated here with examples,  whether the underlying dynamics is a periodic rotation on a circle (Section~\ref{secCompactification}), or a mixing system with a fractal attractor (Section~\ref{secL63}), makes little difference from a methodological standpoint in the context of QMDA. In both cases, we saw that the method can successfully capture highly non-Gaussian features of the measurement distribution that accurately track the evolution of the assimilated observable. 

Another advantageous aspect of QMDA is that it outputs full probability distributions, as opposed to point forecasts such as mean or maximum likelihood estimates. This output can be post-processed in a variety of ways to enable uncertainty quantification, as well as probabilistic decision-making in an operational environment. We also saw that the probability distribution outputs of QMDA lead to natural information-theoretic metrics for quantification of the precision and ignorance of data assimilation. Such metrics have been shown to provide more informative model assessment and validation than conventional root mean square error and pattern correlation metrics in a different context \cite{MajdaQi18}.

Before closing, we outline a few aspects of QMDA that warrant future study and potential improvement, some of which have been already alluded to in Section~\ref{secDiscussion}. First, while in this paper we have shown that the method converges in a limit of large data, one aspect of convergence that has not been addressed is convergence under refinement of the partition employed for spectral discretization of $T_h$. It is possible that a general treatment of this problem in the case of observables with continuous spectrum would employ a rigged Hilbert space structure \cite{BohmGadella89}, allowing $ \rho $ to be extended to an operator on distributions. Second, we have restricted ourselves to the case of scalar-valued observation maps. An interesting question would be how to carry out an analogous QMDA construction for vector-valued functions, possibly taking values in an infinite-dimensional Banach space. Such a construction may have connections with the framework of quantum field theory. Algorithmically, it could be implemented by replacing the scalar-valued kernels employed here in the construction of the data-driven basis by operator-valued kernels appropriate for spaces of vector-valued functions \cite{SlawinskaEtAl18}.  Finally, an important task would be to devise ways of effectively coupling an observable-centric scheme such as QMDA, which employs linear operators on function spaces at its core, with dynamical models based on discretizations of the dynamics in state space (e.g., a partial differential equation model governing fluid flow). With the advent of quantum information processing technologies, it is possible that schemes such as QMDA could provide guidance to the design of next-generation dynamical models of complex systems.  

\begin{acknowledgements}
    The author acknowledges support by ONR YIP grant N00014-16-1-2649 and NSF grant 1842538. He is also grateful to the Department of Computing and Mathematical Sciences at the California Institute of Technology and in particular his host, Andrew M.\ Stuart, for their hospitality and for providing a stimulating environment during a sabbatical, when the majority of this work was completed. 
\end{acknowledgements}

\appendix

\section{\label{appConvergence}Convergence in the limit of large data}

In this appendix, we state and prove the following theorem establishing asymptotic consistency of the data-driven QMDA scheme from Section~\ref{secDataDriven} in the limit of large data. In what follows, we will say that a sequence of $ S$-element partitions  $ \Xi_N$ of $\mathbb{R}$ converges as $N \to \infty$ to an $S$-element partition $\Xi$ if all boundary points of the elements of $ \Xi_N $ (ordered in increasing order at each $N$) converge to the corresponding boundary points of $ \Xi $.   

\begin{thm} \label{thmMain}Consider data assimilation with a bounded observation function $ h \in L^\infty(\mu) $ via the scheme of Section~\ref{secDataDriven}, using a partition $ \Xi_N = \{ \Xi_{0,N}, \ldots, \Xi_{S-1,N} \} $ of $\mathbb{R}$ determined through the empirical quantile function $ \cdf^{-1}_{h,N} $, and starting from the stationary state $ \bar \rho_N$. Assume that the partition $ \Xi = \{ \Xi_0, \ldots, \Xi_{S-1} \} $ of $\mathbb{R} $ associated with the true quantile function $ \cdf^{-1}_h $ is such that $ \mu_h( \{ \xi_k \} ) = 0 $ for all boundary points $ \xi_k $ of the $ \Xi_j \in \Xi $. Then, under the assumptions of Sections~\ref{secObs} and~\ref{secSampling}, for any spectral resolution parameter $ L $, the partitions $ \Xi_N $ and corresponding measurement probabilities $ \hat P_{i,N}(t_j) $ at time $ t_j = j \, \Delta t $, $ j \in \mathbb{N}_0 $, converge as $ N \to \infty $ to $ \Xi $ and the probabilities $ \hat P_{i}(t_j) $ obtained via the scheme of Section~\ref{secCompactification}, respectively, using the same spectral resolution parameter $L$.      
\end{thm}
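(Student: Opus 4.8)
The plan is to reduce the claim to three facts: (i) the partition boundaries determined by the empirical quantile function $ \cdf_{h,N}^{-1} $ converge to those determined by $ \cdf_h^{-1} $; (ii) in the respective data-driven and limiting eigenbases, the matrix elements of the projected shift operator $ U^{(q)}_{L,N} $ and of the projected spectral projectors $ E_{\bar h_N,L}(\{\bar a_{i,N}\}) $ converge to those of $ U^{q\,\Delta t}_L $ and $ E_{\bar h,L}(\{\bar a_i\}) $; and (iii) at each fixed time $ t_j $ the map producing $ \hat\rho_{t_j,N} $ and then $ \hat P_{i,N}(t_j) $ from these finitely many matrix elements is a finite composition of continuous matrix operations (products, traces, division by a scalar that is nonzero in the limit, which holds wherever the limiting resolution-$L$ scheme is well-defined and which we assume; note it holds trivially for the first step since $ \bar\rho_N $ is invariant under the projected evolution). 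Granting (i)--(iii), the conclusion follows by induction on $ j = 0, 1, 2, \ldots $ up to the fixed horizon: at each measurement step the affiliation index $ i = \pi_{h,N}(h(t_m)) $ used in the update equals $ \pi_h(h(t_m)) $ for $ N $ large by (i), provided the measured value is not a boundary of $ \Xi $; since $ \mu(h^{-1}(\{\xi_k\})) = \mu_h(\{\xi_k\}) = 0 $ and $ \Phi^t $ is measure preserving, this holds along the truth orbit for $ \mu $-a.e.\ initial condition.

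The workhorse is an elementary lemma: if continuous functions $ f_N $ converge uniformly on the forward-invariant compact set $ \mathcal{X} $ of Section~\ref{secSampling} to $ f \in C(\mathcal{X}) $, and the sampling measures $ \mu_N $ converge weakly to $ \mu $ (all supported in $ \mathcal{X} $), then $ \int f_N \, d\mu_N \to \int f \, d\mu $; this is immediate from $ \lvert \int (f_N - f) \, d\mu_N \rvert \le \lVert f_N - f \rVert_{C(\mathcal{X})} $ together with weak convergence of $ \mu_N $ applied to the fixed function $ f $. For the shift operator, write $ \langle \phi_{j,N}, U^{(q)}_N \phi_{k,N} \rangle_{\mu_N} = \frac{1}{N} \sum_{n=0}^{N-q-1} \overline{\varphi_{j,N}(x_n)} \, \varphi_{k,N}(\Phi^{q\,\Delta t}(x_n)) $, which equals $ \int \overline{\varphi_{j,N}} \, (\varphi_{k,N} \circ \Phi^{q\,\Delta t}) \, d\mu_N $ up to a tail of at most $ q $ uniformly bounded terms of size $ O(q/N) $; since $ \varphi_{j,N} \to \varphi_j $ uniformly on $ \mathcal{X} $ (Section~\ref{secEig}) and $ \Phi^{q\,\Delta t}(\mathcal{X}) \subseteq \mathcal{X} $, the integrand converges uniformly on $ \mathcal{X} $ to $ \overline{\varphi_j} \, (\varphi_k \circ \Phi^{q\,\Delta t}) $, and the lemma gives convergence to $ \langle \phi_j, U^{q\,\Delta t} \phi_k \rangle_\mu $.

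The delicate step is the spectral projector, $ \langle \phi_{j,N}, E_{\bar h_N}(\{\bar a_{i,N}\}) \phi_{k,N} \rangle_{\mu_N} = \int \overline{\varphi_{j,N}} \, \varphi_{k,N} \, (1_{\Xi_{i,N}} \circ h) \, d\mu_N $, where the integrand is neither continuous nor tied to a fixed set. After replacing $ \overline{\varphi_{j,N}} \varphi_{k,N} $ by $ \overline{\varphi_j} \varphi_k $ at a cost $ \lVert \overline{\varphi_{j,N}} \varphi_{k,N} - \overline{\varphi_j} \varphi_k \rVert_{C(\mathcal{X})} \to 0 $, one squeezes $ 1_{\Xi_{i,N}} \circ h $ between $ u^-_\delta \circ h $ and $ u^+_\delta \circ h $, where $ u^\pm_\delta $ are fixed continuous functions on $ \mathbb{R} $ with $ u^-_\delta \le 1_{\Xi_i} \le u^+_\delta $, equal to $ 1 $ on $ [\xi_i + \delta, \xi_{i+1} - \delta] $ and to $ 0 $ off $ (\xi_i - \delta, \xi_{i+1} + \delta) $; this sandwich is valid for all large $ N $ precisely because the boundaries of $ \Xi_{i,N} $ converge to those of $ \Xi_i $, which is Step~(i) --- itself obtained from $ \cdf_{h,N}(\xi_k) \to \cdf_h(\xi_k) $ at the $ \mu_h $-continuity points $ \xi_k $ (here the hypothesis $ \mu_h(\{\xi_k\}) = 0 $ is used, via the portmanteau theorem applied to $ 1_{(-\infty,\xi_k]} \circ h $) followed by the standard passage from distribution functions to quantile functions. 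Applying the lemma to $ u^\pm_\delta \circ h $, then letting $ N \to \infty $ and $ \delta \to 0 $, monotone convergence identifies both bounds with $ \int \overline{\varphi_j} \varphi_k \, 1_{h^{-1}([\xi_i,\xi_{i+1}])} \, d\mu $ up to the $ \mu $-null set $ h^{-1}(\{\xi_i, \xi_{i+1}\}) $, i.e.\ with $ \langle \phi_j, E_{\bar h}(\{\bar a_i\}) \phi_k \rangle_\mu $ (complex integrands are handled by splitting into real and imaginary, positive and negative parts). I expect this squeeze, together with the companion quantile-convergence argument of Step~(i), to be the main obstacle: it is the only place where weak convergence of $ \mu_N $ must be reconciled with both the discontinuity of the spectral projectors and the motion of the partition, and it is where the boundary hypothesis is essential; the rest is either cited (uniform convergence of the $ \varphi_{j,N} $) or routine continuity of matrix algebra.
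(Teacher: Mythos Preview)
Your proposal is correct and follows essentially the same architecture as the paper's proof: reduce to convergence of the partition boundaries, convergence of the finitely many matrix elements of the projected shift and spectral projectors in the data-driven basis, and then propagate through the finite chain of continuous matrix operations. The shift-operator step and the quantile-convergence step are handled in the same way as in the paper.

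The one genuine technical difference is in the spectral-projector step. The paper introduces the signed measures $ \tilde\mu_N(\Omega) = \int_\Omega \varphi_{j,N}\varphi_{k,N}\,d\mu_N $ and $ \tilde\mu(\Omega) = \int_\Omega \varphi_j\varphi_k\,d\mu $, splits $ \lvert \tilde\mu_N(M_{i,N}) - \tilde\mu(M_i)\rvert $ into a ``moving-set'' part $ \lvert \tilde\mu_N(M_{i,N}) - \tilde\mu_N(M_i)\rvert $ and a ``fixed-set'' part $ \lvert \tilde\mu_N(M_i) - \tilde\mu(M_i)\rvert $, and kills the latter with the portmanteau theorem on the $ \mu $-continuity set $ M_i $ and the former by bounding it by $ C\,\lvert \mu_N(M_{i,N}) - \mu_N(M_i)\rvert $ and using $ \mu_N(M_{i,N}) = 1/S $ together with $ \mu_N(M_i)\to 1/S $. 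Your route instead builds a single continuous sandwich $ u^-_\delta \le 1_{\Xi_{i,N}} \le u^+_\delta $ valid for large $ N $ and lets $ N\to\infty $ then $ \delta\to 0 $, handling the discontinuity of the indicator and the motion of the partition in one stroke. Your version is a touch more elementary (no signed-measure portmanteau) and makes the role of the boundary hypothesis transparent at the $ \delta\to 0 $ step; the paper's version is slightly more compact. Both exploit exactly the same hypothesis $ \mu_h(\{\xi_k\}) = 0 $ at the decisive moment. Your added remark that equality of the affiliation indices $ \pi_{h,N}(h(t_m)) = \pi_h(h(t_m)) $ holds only off the null set $ h^{-1}(\{\xi_k\}) $, hence for $ \mu $-a.e.\ truth orbit, is a welcome clarification that the paper leaves implicit.
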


\begin{proof}

    It suffices to show that, as $N \to \infty$, (i) the elements of all $L \times L $ matrices representing the operators employed in the data-driven scheme converge to their counterparts from Section~\ref{secCompactification}; and (ii) $ \Xi_N $ converges to $ \Xi $.  Note, in particular, that the latter convergence implies that the affiliation functions $ \pi_{h,N} $ converge to $ \pi_N $ pointwise, and thus that the finitely many evaluations of $ \pi_{h,N} $ during the time interval $ [ 0, t_j ] $ also converge.  

    Starting from (ii), recall that the boundary points $ \xi_k $ and $ \xi_{k,N} $ of the intervals in $ \Xi $ and $ \Xi_N $, respectively, are obtained by evaluating the corresponding quantile and empirical quantile functions at the same quantile points $ b_k \in ( 0, 1 ) $; that is, $ \xi_k = \cdf_h^{-1}(b_k) $ and $\xi_{k,N} = \cdf_{h,N}^{-1}(b_k)$. Because $ \mu_h( \{ \xi_k \} ) = 0 $,  $ \xi_k $ and $ b_k$ are continuity points of $ \cdf_h $ and $ \cdf_h^{-1} $, respectively. As a result, by the assumed weak convergence of the sampling measures $ \mu_N $ to $ \mu $ (which implies weak convergence of the corresponding pushforward measures $\mu_{h,N}$ under $h$ to $ \mu_h$), the values $ \cdf^{-1}_{h,N}( b_k) $ of the empirical quantile functions converge, as $N \to \infty$, to $ \cdf^{-1}_h(b_k) $ \cite{FristedtGray97}. This shows that $ \Xi_N $ converges to $ \Xi $.

    Turning to (i), the operators that we need to consider are (a) the initial state $ \bar \rho_N$; (b) the shift operator $U^{(q)}_{N}$; and (c) the projection operators $E_{\bar h_N}(\{ \bar a_{i,N} \}) = T_{1_{M_{i,N}}} $. Indeed:

    \begin{enumerate}[(a), wide]
        \item The matrix elements $ \langle \phi_{j,N}, \bar \rho_N \phi_{k,N} \rangle_{\mu_N} = \delta_{j0}\delta_{k0} $ are trivially equal to $ \langle \phi_{j}, \bar \rho \phi_{k} \rangle_{\mu} = \delta_{j0}\delta_{k0} $.
        \item Because the basis functions $ \varphi_{j,N}$ converge uniformly to $ \varphi_j $ on $\mathcal{X}$ (see Section~\ref{secEig}), the matrix elements  of the shift operator $U^{(q)}_N $  converge to those of the Koopman operator $U^t$ at $ t = q \, \Delta t $, viz. 
            \begin{multline*}
                \lim_{N\to\infty}\langle \phi_{j,N}, U^{(q)}_{N} \phi_k \rangle_{\mu_N} \\
                \begin{aligned}
                    &= \lim_{N\to\infty} \frac{1}{N} \sum_{n=0}^{N-1} \phi_{j,N}(x_n) \phi_{k,N}(x_{n+q}) \\
                    &= \lim_{N\to\infty} \frac{1}{N} \sum_{n=0}^{N-1} \varphi_{j,N}(x_n) \varphi_{k,N}(x_{n+q}) \\
                &= \lim_{N\to\infty} \frac{1}{N} \sum_{n=0}^{N-1} \varphi_{j,N}(x_n) ( \varphi_{k,N} \circ \Phi^{t} )(x_{n}) \\
                &= \int_M \varphi_{j}(x) ( \varphi_{k} \circ \Phi^{t} )(x)  \, d\mu(x)\\
                &= \int_M \phi_{j}(x) U^t \phi_k(x)  \, d\mu(x)\\
                &= \langle \phi_j, U^t \phi_k \rangle_\mu.
            \end{aligned}
        \end{multline*}
    \item Let $ \Xi_{i,N} $ and $ \Xi_i $ be the $i$-th elements of $ \Xi_N $ and $\Xi$, respectively, where $ i \in \{ 0, \ldots, S-1 \} $ is arbitrary. We must show that, as $N \to \infty$,  $ \langle \phi_{j,N}, T_{1_{M_{i,N}}} \phi_{k,N} \rangle_{\mu_N} $ converges to $ \langle \phi_j, T_{1_{M_i}} \phi_k \rangle_\mu $. To that end, observe that  $ \langle \phi_{j,N}, T_{1_{M_{i,N}}} \phi_{k,N} \rangle_{\mu_N} = \tilde \mu_N(M_{i,N}) $ and  $ \langle \phi_j, T_{1_{M_i}} \phi_k \rangle_\mu = \tilde \mu(M_i) $, where $\tilde \mu_N $ and $ \tilde \mu $ are finite, signed Borel measures on $M$ such that $ \tilde \mu_N(\Omega) = \int_\Omega \varphi_{j,N} \varphi_{k,N} \, d\mu_N $ and $ \tilde \mu(\Omega) = \int_\Omega \varphi_{j} \varphi_{k} \, d\mu $. As a result, the claim will follow if it can be shown that
        \begin{align}
            \nonumber \lvert \tilde \mu_N(M_{i,N}) - \tilde \mu(M_i) \rvert &\leq \lvert \tilde \mu_N(M_{i,N}) - \tilde \mu_N(M_i) \rvert \\
            \label{eqMuBound}& \quad + \lvert \tilde \mu_N(M_i) - \tilde \mu(M_i) \rvert
        \end{align}
        vanishes as $N \to \infty$. Now, it is straightforward to verify that, by uniform convergence of $ \varphi_{j,N}$ to $ \varphi_j$, $\tilde \mu_N $ converges weakly to $ \tilde \mu $. As a result, because $ M_i $ is a continuity set of $ \tilde \mu $ (i.e., $ \tilde \mu( \partial M_i ) = 0 $, which follows from the fact that $ \mu_h(\{\xi_k\})=0 $ for all boundary points $ \xi_k$), $ \tilde \mu_N( M_i) $ converges to $ \tilde \mu(M_i) $, and the second term in the right-hand side of~\eqref{eqMuBound} vanishes. Similarly, it follows by uniform convergence of $ \varphi_{j,N} $ to $ \varphi_{j}$ that there exists a constant $C $ such that $\lvert \tilde \mu_N(M_{i,N}) - \tilde \mu_N(M_i) \rvert \leq C \lvert \mu_N(M_{i,N}) - \mu_N(M_i) \rvert$. Therefore, because $ \mu_N(M_{i,N}) = 1/ S $ by construction, we can conclude that the first-term in the right-hand side of~\eqref{eqMuBound} will also converge to zero if it can be shown that $\mu_N(M_i)$ converges to $1/S$. The latter follows immediately from the weak convergence of $ \mu_N $ to $ \mu $ and  fact that $M_i $ is a continuity set of $\mu$. 
    \end{enumerate}
    This completes the proof of Claim (i) and of the theorem.
\end{proof}

It should be noted that the assumption in Theorem~\ref{thmMain} that the boundary points have vanishing $ \mu_h $ measure is mild, in the sense that if not satisfied, the condition can be met by shifting the problematic $ \xi_k $ by arbitrarily small amounts.

%\bibliography{bibliography}

\section{\label{appComputational}Computational considerations}

In this appendix, we outline aspects of the numerical implementation and computational cost of the data-driven implementation of the QMDA framework described in Section~\ref{secDataDriven}, and employed in the numerical experiments of Section~\ref{secL63}. Algorithmically, the main steps of the procedure are (i) computation of the eigenvectors $ \vec \phi_{j,N}$ representing the data-driven basis elements $\phi_{j,N}$ from the training data; (ii) representation of the Koopman operator (approximated by the shift operator) and spectral projectors in this basis by matrices; and (iii) execution of the prediction-correction data assimilation cycle from sequential observations of the system. A key element of this procedure is that following an expensive, offline calculation step to compute the $\phi_{j,N}$, the cost of operator representation is controlled by the spectral resolution parameter $L$, which is independent of the dimension of the ambient data space and number of training samples, thus aiding the scalability of the framework to large training datasets. The numerical experiments in Section~\ref{secL63} were carried out using a Matlab code for QMDA running on a desktop-class workstation of modest specifications at the time of writing of this paper (Intel(R) Core(TM) i7-3770 CPU at 3.40 GHz, with 32GB of memory).

\subsection{\label{appBasis}Data-driven basis}

The computation of the $\vec\phi_{j,N}$ proceeds via well-established kernel algorithms for machine learning. In this step, a major component of the computational cost, both in terms of CPU time and memory, is associated with the computation of the $N \times N$ kernel matrix $\bm G$ associated with the observations $F(x_n) \in \mathbb{R}^m$. Here, we compute this matrix in brute force, resulting in an $O(mN^2)$ time cost, but the calculation is trivially parallelizable. As is customary, to address the memory cost for $ \bm G$, which is nominally $O(N^2)$, we take advantage of the exponential decay of the kernel in~\eqref{eqKVB}, and approximate $\bm G$ by a sparse, symmetric  $N\times N$ matrix $ \hat{\bm G} $, such that $ \hat G_{mn} = G_{mn} $ if data point $F(x_m) $ is in the $r$-th nearest neighborhood of $F(x_n)$, of $F(x_n)$ is in the $r$-th nearest neighborhood of $F(x_m)$ for some neighborhood parameter $ r \ll N $, and $ \hat G_{mn} = 0 $ otherwise. In particular, in the experiments of Section~\ref{secL63} we use $ r = 5000 $, corresponding to $ \simeq 8\%$ of the $N = \text{64,000} $ training data points. 

We compute leading $L$ eigenvectors $ \vec \phi_0, \ldots, \vec \phi_{L-1}$ of $ \hat{\bm G} $  using Matlab's \texttt{eigs} solver, which is based on implicitly restarted Arnoldi methods in the ARPACK library \cite{LehoucqEtAl98}. The eigenvectors $ \vec \phi_j $ then provide representations of the basis elements $ \phi_{j,N} $ (see Section~\ref{secEig}).  Elsewhere, we have demonstrated the feasibility of this implementation for computing eigenfunctions from high-dimensional datasets of moderate sample number, $(d,N) = O(10^6,10^4)$ \cite{GiannakisEtAl18b}, or datasets of moderate dimension and high sample number, $(d,N) = O(10^2,10^6)$ \cite{GiannakisEtAl19}. In the latter case, it should be possible to speed up the kernel matrix calculation using tree-based \cite{AryaEtAl98} or randomized \cite{JonesEtAl11} approximate nearest-neighbor algorithms, though we have not explored such options in the present work. 

\subsection{\label{appOps}Operator representation}

Having obtained the data-driven basis functions $\phi_{j,N}$, we proceed to construct the $L \times L $ matrices representing the Koopman operators and spectral projectors from Section~\ref{secOpApprox}. 

In the case of the Koopman operators we represent $ U^{(q)}_{L,N} $ for each time step $ q \in \mathbb{N} $ of interest by a matrix $\bm U^{(q)} $ with elements 
    \begin{align*}
        \bm U^{(q)}_{jk} &= \langle \phi_{j,N}, U^{(q)}_{L,N} \phi_{k,N} \rangle_{\mu_N} = \frac{1}{N} \sum_{n=0}^{N-1-q} \vec \phi_{j,n} \vec \phi_{k,n+q},
    \end{align*}
where $0 \leq j,k \leq L - 1$, and $ \vec \phi_{j,n}$ denotes the $n$-th component of $ \vec \phi_j $. The computation cost to form this matrix is $O(N L^2)$. In order to carry out the forward evolution of the density operator between measurements (step (DA2) in Section~\ref{secQMDA}) one requires the formation of $\bm U^{(q)}$ at least for $q$ equal to number of timesteps $ \Delta t$ in each data assimilation interval (e.g., in Section~\ref{secL63}, $q=100$). The matrices $ \bm U^{(q)} $ can be computed for other values of $q  $ if forecast output at other times is desired. In particular, to obtain the results in Figs.~\ref{figPL63} and~\ref{figPL63Q} we employ $\bm U^{(0)}, \ldots, \bm U^{(100)}$. Alternatively, one can compute the 1-step matrix $\bm U^{(1)}$, and use the matrix power $(\bm U^{(1)})^q $ instead of $ \bm U^{(q)} $. This approach avoids the storage cost for $ \bm U^{(q)}$ (at an additional computation cost for on-the-fly computation of $(\bm U^{(1)})^q $), without affecting the asymptotic convergence properties of the method in the large data limit, but introduces a risk of numerical instability at large $q $ (e.g., if $ \bm U^{(1)} $ has eigenvalues with positive real part). 

Next, for each of the $S$ elements $ \Xi_{i,N} $ of the averaging partition for the assimilated observable $h$, we compute an $L\times L $ matrix $ \bm E_i $ representing the spectral projection $E_{\bar h_N,L}(\{ \bar a_{i,N} \} )$. For a given $ \Xi_{i,N} $, this is done by first identifying the timestamps in the training data for which $h$ takes values in this set, viz.
    \begin{displaymath}
        N_i = \{ n \in [ 0, N-1] : h(x_n) \in \Xi_{i,N} \},
    \end{displaymath}
    and then computing the matrix elements
    \begin{displaymath}
        \bm E_{i,jk} = \langle \phi_{j,N}, E_{\bar h_N }(\{ \bar a_{i,N} \}) \phi_{k,N} \rangle_{\mu_N} 
        = \frac{1}{N} \sum_{n \in N_i} \vec \phi_{j,n} \vec \phi_{k,n},
    \end{displaymath}
where $0 \leq i,j \leq L-1$. As with the Koopman matrices $\bm U^{(q)}$, the computational cost of forming the $\bm E_i$ is $O(NL^2)$.

\subsection{Sequential data assimilation}

The necessary ingredients to perform QMDA given discrete-time observations of $h$ are the $L\times L $ matrices $\bm U^{(q)}$ and $\bm E_i$, representing the Koopman operator and spectral projectors, respectively, as well as $L \times L $ matrices $ \bm \rho $ and $ {\bm \rho^+}$, containing the matrix elements of the density operators $ \hat \rho_{t,N}$ and $ \hat \rho^+_{i,N} $ between observations and immediately after observations of $h $, respectively (see Section~\ref{secOpApprox}). In particular, suppose that observations of $h$ are made every $ q \, \Delta t $ time units, with $ q $ a positive integer, and right after a measurement $h(t_n) \in \Xi_{i,N}$ at time $ t_n $, $ n \in \mathbb{N}_0 $, the density matrix is equal to $ \hat{\bm \rho}^+ $, where $ \bm \rho^+_{jk} = \langle \phi_{j,N}, \hat \rho^+_{i,N} \phi_{k,N} \rangle_{\mu_N}$, $0 \leq j,k \leq L - 1$. Then, the density matrix immediately before the measurement at time $ t_{n+1} $ is given by 
\begin{displaymath}
    \bm \rho =\frac{ \bm U^* \bm \rho^+ \bm U }{ \tr( \bm U^* \bm \rho^+ \bm U ) },   
\end{displaymath}
where $ \bm \rho_{jk} = \langle \phi_{j,N}, \hat \rho_{t,N} \phi_{k,N} \rangle_{\mu_N} $ and $ 0 \leq j,k \leq L -1 $. Moreover, the measurement probability for $h $ to lie in interval $ \Xi_{i,N}$ is determined via
\begin{displaymath}
    \hat P_{i,N}(t_{n+1}) =  \tr(\bm E_i \bm \rho).
\end{displaymath}
When the measurement of $ h $ at time $ t_{n+1} $ is made, and found to lie, say, in interval $ \Xi_{i,N} \in \Xi_N$, the density matrix $ \bm \rho $ is updated to obtain a new density matrix $ \bm \rho^+ $, given by
\begin{displaymath}
    \bm \rho^+ = \frac{\bm E_i \bm \rho \bm E_i}{\tr(\bm E_i \bm \rho \bm E_i )}. 
\end{displaymath}
The data assimilation cycle described above is then repeated using the updated density matrix $ \bm \rho^+ $.   

The computational cost to compute $ \bm \rho $ from $ \bm \rho^+ $ by forward evolution with the Koopman operators, and to update $ \bm \rho^+ $ to $ \bm \rho $ by spectral projection is dominated by matrix-matrix multiplication of $L \times L $ matrices, and is thus $O(L^3)$. The cost to compute the measurement probabilities $\hat P_{i,N}(t)$ for all $S$ elements of $\Xi $ is $O(SL) $. As previously stated, a key aspect of this procedure is that following the offline computations of the basis and operator representations in Sections~\ref{appBasis} and~\ref{appOps}, respectively, the computation cost becomes decoupled from the dimension $m$ of the ambient data space and the number $N$ of training samples, depending only on the spectral resolution parameter $L$ and the size of the partition $S$. This is particularly advantageous in real-time applications (e.g., short-term precipitation forecasting), where computational wall-clock time must be significantly smaller than physical time between observations in order for data assimilation to provide useful information.

%\bibliography{bibliography}

%merlin.mbs apsrev4-1.bst 2010-07-25 4.21a (PWD, AO, DPC) hacked
%Control: key (0)
%Control: author (72) initials jnrlst
%Control: editor formatted (1) identically to author
%Control: production of article title (-1) disabled
%Control: page (0) single
%Control: year (1) truncated
%Control: production of eprint (0) enabled
%
\end{document}